 \newtheorem{thm}{Theorem}[section]
 \newtheorem{cor}[thm]{Corollary}
 \newtheorem{lem}[thm]{Lemma}
 \newtheorem{rem}[thm]{Remark}
 \theoremstyle{definition}
 \theoremstyle{remark}
 \numberwithin{equation}{section}
 \def\idty{{\mathchoice {\mathrm{1\mskip-4mu l}} {\mathrm{1\mskip-4mu l}} %
{\mathrm{1\mskip-4.5mu l}} {\mathrm{1\mskip-5mu l}}}}
\newcommand{\R}{{\mathbb R}}
\newcommand{\supp}{\operatorname{supp}}
\newcommand{\om}{\omega}
\newcommand{\la}{\lambda}
\newcommand{\abs}[1]{\left\vert#1\right\vert}
\newcommand{\be}{\begin{equation}}
\newcommand{\ee}{\end{equation}}
\begin{document}

\renewcommand{\thefootnote}{\fnsymbol{footnote}}
\title{Dispersive Estimates for Harmonic Oscillator Systems}

\author[V. Borovyk]{Vita Borovyk}
\address{Department of Mathematical Sciences\\
University of Cincinnati\\
Cincinnati, OH 45221, USA}
\email{vita.borovyk@uc.edu}

\author[R. Sims]{Robert Sims}
\thanks{R.\ S.\ was supported in part by NSF grants DMS-0757424 and DMS-1101345}
\address{Department of Mathematics\\
University of Arizona\\
Tucson, AZ 85721, USA}
\email{rsims@math.arizona.edu}

\date{Version: \today }
\maketitle
\bigskip
\begin{abstract}
We consider a large class of harmonic systems, each defined as a quasi-free dynamics on
the Weyl algebra over $\ell^2( \mathbb{Z}^d)$. In contrast to recently obtained,
short-time locality estimates, known as Lieb-Robinson bounds, we prove a number
of long-time dispersive estimates for these models.
\end{abstract}

\maketitle

\footnotetext[1]{Copyright \copyright\ 2011 by the authors. This
paper may be reproduced, in its entirety, for non-commercial
purposes.}

%
%
%
%

\section{Introduction}\label{sec:intro}

In statistical mechanics, a central object of study is the time evolution, or dynamics, 
corresponding to Hamiltonian systems. Those models
with infinitely many degrees of freedom, often thought of as particles, are of
particular interest since they provide a setting in which one may
investigate the development of macroscopic non-equilibrium phenomena
through microscopic details of the system. One of the most well-studied models
is that of a coupled system of harmonic oscillators. For systems of this
type, particles are situated at the sites of a lattice, and they are allowed
to interact through linear forces. The goal of the present work
is to study the long-time behavior of an infinite volume harmonic dynamics
and provide a particular type of dispersive estimate. 

Our interest in these dispersive estimates stems from a wealth of recent
work on locality bounds for general non-relativistic systems; in addition to the 
references included below, see \cite{NS3, hastings2010, sims2010} for recent review articles.
A number of important physical systems are governed by a non-relativistic 
Hamiltonian dynamics, e.g. models of magnetism, lattice oscillators, and 
a variety of complex networks.  It is inherent in models of this type that there
is no strict equivalent of a finite speed of light. Despite this fact, in 1972
Lieb and Robinson, see \cite{lieb1972}, proved that one can associate a finite group 
velocity to the dynamics of many such systems. 

Let us briefly describe their result. Consider a quantum spin
system, see \cite{bratteli1997} for a more thorough development, 
defined over $\mathbb{Z}^d$. In other words, to each $x \in \mathbb{Z}^d$
associate a finite dimensional Hilbert space $\mathcal{H}_x$. For any
finite set $\Lambda \subset \mathbb{Z}^d$, define a composite Hilbert space and 
algebra of local observables by setting
\begin{equation}
\mathcal{H}_{\Lambda} = \bigotimes_{x \in \Lambda} \mathcal{H}_x \quad \mbox{and} \quad \mathcal{A}_{\Lambda} = \bigotimes_{x \in \Lambda} \mathcal{B}( \mathcal{H}_x) \, ,
\end{equation}
where $\mathcal{B}( \mathcal{H})$ is the set of bounded linear operators over $\mathcal{H}$.
Due to the tensor product structure, it is clear that if $\Lambda_0 \subset \Lambda$, then
any $A \in \mathcal{A}_{\Lambda_0}$ can be identified with $A' = A \otimes \idty_{\Lambda \setminus \Lambda_0} \in \mathcal{A}_{\Lambda}$
and so we may regard $\mathcal{A}_{\Lambda_0} \subset \mathcal{A}_{\Lambda}$.
A Hamiltonian $H_{\Lambda}$ on $\mathcal{H}_{\Lambda}$ is a densely defined self-adjoint operator. Self-adjointness guarantees the
existence of a dynamics, or time evolution, $\tau_t^{\Lambda}$, defined by setting
\begin{equation}
\tau_t^{\Lambda}(A) = e^{itH_{\Lambda}}A e^{-itH_{\Lambda}} \quad \mbox{for all } t \in \mathbb{R} \quad \mbox{and} \quad A \in \mathcal{A}_{\Lambda}.
\end{equation} 
Now, fix $X$ and $Y$ finite, disjoint subsets of $\mathbb{Z}^d$ and take $\Lambda \subset \mathbb{Z}^d$ finite with
$X \cup Y \subset \Lambda$. Lieb and Robinson proved that for a large class of Hamiltonians, defined in terms of essentially short-range
interactions, and any decay rate $\mu>0$, there exist numbers $C$ and $v$ such that the bound  
\begin{equation} \label{eq:lrb}
\left\| \left[ \tau_t^{\Lambda}(A), B \right] \right\| \leq C \| A\| \| B \| \min[|X|,|Y|] e^{- \mu \left( d(X,Y) - v|t| \right)} \, 
\end{equation}
holds for all $A \in \mathcal{A}_X$, $B \in \mathcal{A}_Y$, and all $t \in \mathbb{R}$. Here $|X|$ denotes the
cardinality of the set $X$. Optimizing $v$ over the decay rate $\mu$
produces a number, often called the system's Lieb-Robinson velocity, which can, in general, be 
bounded in terms of an appropriate norm on the interaction.  

Since $X$ and $Y$ are disjoint, the tensor product structure of the observable algebra implies that $[ \tau^{\Lambda}_0(A), B] = [A,B] =0$. 
The estimate in (\ref{eq:lrb}) above then demonstrates that for times $t$ with $|t| \leq d(X,Y) /v$ this commutator remains exponentially small.
In this case, disturbances do not propagate through the system arbitrarily fast. Moreover, bounds of the form (\ref{eq:lrb}) can
be used to show that, despite the fact that the system is non-relativistic, dynamically evolved local observables can be well approximated by strictly local 
observables; at least for small times. These ideas have motivated a number of important improvements on the original Lieb-Robinson bounds 
\cite{nach12006, hast2006, nach22006, nachtergaele2009, PS09, Pou10, schuch2010, nach2011}, and these extensions
have proven useful in a variety of applications 
\cite{hastings2004a, NS2, hastings2007, nachtergaele2010, HS, bravyi:2010a, bravyi:2010b, bachmann2011}. 

It is clear from (\ref{eq:lrb}) that Lieb-Robinson bounds only provide relevant information for small times.
In fact, for large times the Lieb-Robinson bound grows exponentially, however, a naive commutator
estimate shows that
\begin{equation} \label{eq:aprioriest}
\left\| \left[ \tau_t(A), B \right] \right\| \leq 2 \| A \| \| B \|
\end{equation}
for any observables $A$ and $B$ and all $t \in \mathbb{R}$. Here we are only using the fact that
the time-evolution is an automorphism. There are a number of important physical systems
where one expects the norms of commutators, as above, to be small for large times; not
because of local effects, but rather because the dynamics is dispersive. The goal of this article 
is to investigate a class of models where the long time behavior of these Lieb-Robinson type 
commutators can be analyzed.

There is a vast literature concerning the dynamics of models corresponding to harmonic oscillators, however,
questions concerning their locality properties are more modern.
In fact,  analogues of Lieb-Robinson type bounds were first proven for certain classical oscillator 
systems in \cite{marchioro1978}, see also \cite{butta2007} and \cite{raz2009} for newer developments.
Locality estimates for quantum models are much more recent, see \cite{eisert2008} for general harmonic
systems and  \cite{nachtergaele2009, amour2009, nachtergaele2010} for models which
allow for anharmonicities. Due to technical difficulties associated with unbounded operators, 
the results one obtains in this context are valid for only a restricted class of observables. 
For example, in \cite{nachtergaele2009} (see also \cite{nachtergaele2010}), the bound
\begin{equation} \label{eq:lrb2}
\left\| \left[ \tau_t(W(f)), W(g) \right] \right\| \leq C \| f \|_1 \| g \|_1 \min[ |X|, |Y|]  e^{- \mu(d(X,Y) - v|t|)}
\end{equation}
is proven to be valid for any Weyl operators $W(f)$ and $W(g)$ corresponding to functions
$f$ and $g$ in $\ell^1(\mathbb{Z}^d)$. Here it is assumed that $f$ and $g$ have 
disjoint supports $X$ and $Y$ respectively, at least one of which being finite; 
see Section~\ref{sec:M+R} for a more detailed discussion of Weyl operators.
In this paper, we will demonstrate that commutators as in (\ref{eq:lrb2}) also 
decay for large time, i.e. we will prove dispersive estimates for certain harmonic lattice models.
 
The paper is organized as follows. In Section~\ref{sec:M+R}, we introduce a class of harmonic models as a 
quasi-free dynamics on the Weyl algebra $\mathcal{W}( \ell^2( \mathbb{Z}^d))$. This setting allows us
to analyze these models directly in the infinite volume context. After discussing relevant
notation (Sections 2.1 and 2.2), we state our main theorems in Section~\ref{sec:mainresults}. The proofs of
these results follow in Sections \ref{sec:uniest} and \ref{sec:main}.

%
%
%
%

\section{Models and Results}\label{sec:M+R}

In this section, we make precise the results of this paper.
We begin, in Section~\ref{subsec:waqd}, with a brief discussion of 
Weyl algebras and the notion of a quasi-free dynamics on a Weyl algebra.
Next, in Section~\ref{subsec:models}, we introduce a large class of
harmonic models on the Weyl algebra over
$\ell^2( \mathbb{Z}^d)$. In Section~\ref{sec:mainresults}, we state our main results, namely
Theorems~\ref{thm:uniest}--\ref{thm:main}. We conclude this section with
some comments and remarks.

%
%
%

\subsection{Weyl algebras and a quasi-free dynamics} \label{subsec:waqd}

We now briefly introduce some relevant notation, and refer the interested reader to \cite{manuceau1973} and 
also \cite{bratteli1997} for a broader discussion. 

A Weyl algebra, or CCR algebra (for canonical commutation relations), can be defined
over any real linear space $\mathcal{D}$ equipped with a symplectic, non-degenerate
bilinear form $\sigma$. This means that, in addition to bilinearity, 
$\sigma : \mathcal{D} \times \mathcal{D} \to \mathbb{R}$ satisfies:
\begin{equation} \label{eq:symp}
\sigma(f,g) = - \sigma(g,f) \quad \mbox{for all } f, g \in \mathcal{D} ,
\end{equation}
and if $\sigma(f,g) = 0$ for all $f \in \mathcal{D}$, then $g = 0$. The Weyl algebra
over $\mathcal{D}$, which we will denote by $\mathcal{W}( \mathcal{D})$, is then
defined to be a $C^*$-algebra generated by Weyl operators, i.e., non-zero elements $W(f)$, 
associated to each $f \in \mathcal{D}$, which satisfy
\begin{equation} \label{eq:invo}
W(f)^* = W(-f) \quad \mbox{for each } f \in \mathcal{D} \, ,
\end{equation}
and
\begin{equation} \label{eq:weylrel}
W(f) W(g) = e^{-i \sigma(f,g)/2} W(f+g) \quad \mbox{for all } f, g \in \mathcal{D} \, .
\end{equation}
It is well known that a $C^*$-algebra generated
by these Weyl operators with the property that $W(0) = \idty$, $W(f)$ is unitary
for all $f \in \mathcal{D}$, and $\| W(f) - \idty \| = 2$ for all $ f \in \mathcal{D} \setminus \{0 \}$ is unique up to $*$-isomorphism, 
see e.g. \cite{bratteli1997}, Theorem 5.2.8. 

A further consequence of Theorem 5.2.8 of \cite{bratteli1997} is that
certain mappings on $\mathcal{D}$ generate evolutions on $\mathcal{W}( \mathcal{D})$.
Specifically, any group of real linear, symplectic transformations $\{ T_t \}_{t \in \mathbb{R}}$, i.e. 
for each $t \in \mathbb{R}$, $T_t:\mathcal{D}\to\mathcal{D}$ and 
\begin{equation} \label{eq:sympT}
\sigma(T_t f, T_t g)= \sigma(f,g)\, ,
\end{equation}
generates a unique dynamics $\tau_t$ on $\mathcal{W}( \mathcal{D})$ by the relation 
\begin{equation} \label{eq:quasifree}
\tau_t(W(f))=W(T_t f) \quad \mbox{for all } f \in \mathcal{D}.
\end{equation}
It is easy to check that this dynamics, often called quasi-free, is a one-parameter
group of $*$-automorphisms on $\mathcal{W}( \mathcal{D})$.
The goal of this work is to analyze the long time behavior of a
large class of models defined in this manner.

%
%
%
%

\subsection{Harmonic evolutions in infinite volume} \label{subsec:models}
We now introduce a standard class of harmonic systems. The formalism above allows us to define
our models immediately in the infinite volume, i.e. the thermodynamic limit. To
motivate this definition, however, we first recall some well-known, finite volume 
calculations and the requisite notation.

For the models we are interested in $\mathcal{D}$ will be a complex inner product space associated
with $\mathbb{Z}^d$. A common choice is $\mathcal{D} = \ell^2( \mathbb{Z}^d)$, however,  
it is also useful to consider other subspaces,  such as $\mathcal{D} = \ell^1( \mathbb{Z}^d)$
or $\mathcal{D} = \ell^2( \Lambda)$ for some finite $\Lambda \subset \mathbb{Z}^d$. With any of these
choices, the symplectic form $\sigma$ is given by
\begin{equation} \label{eq:hsig}
\sigma(f,g) = \mbox{Im} \left[ \langle f, g \rangle \right] \, \quad \mbox{for } f, g \in \mathcal{D}.
\end{equation}

In finite volume, the models we consider are defined in terms of Hamiltonians
representing a system of coupled harmonic oscillators. For each integer $L \geq 1$,
set $\Lambda_L = \left( -L, L \right]^d  \subset \mathbb{Z}^d$.
The Hamiltonian 
\begin{equation} \label{eq:harmham}
H_L\, = \,  \sum_{ x \in \Lambda_L} p_{ x }^2 \, +\, \omega^2 \, q_{ x}^2 \, + \,
\sum_{j = 1}^{d}  \lambda_j \, (q_{ x } - q_{ x + e_j})^2 \, ,
\end{equation}
equipped with periodic boundary conditions, is a well-defined self-adjoint operator acting  on the Hilbert space
\begin{equation} \label{eq:hspace}
\mathcal{H}_L = \bigotimes_{x \in \Lambda_L} L^2(\mathbb{R}, dq_x).
\end{equation}
To be concrete, the quantities $p_x$ and $q_x$, which appear in (\ref{eq:harmham}) above, are the
single site momentum and position operators regarded as operators on the full Hilbert space $\mathcal{H}_L$
by setting
\begin{equation} \label{eq:pandq}
p_x = \idty \otimes \cdots \otimes \idty
\otimes -i \frac{d}{dq} \otimes \idty \otimes \cdots \otimes \idty \quad
\mbox{ and } \quad q_x = \idty \otimes \cdots \otimes \idty \otimes q \otimes \idty \otimes
\cdots \otimes \idty,
\end{equation}
i.e., these unbounded self-adjoint operators act non-trivially only in the $x$-th factor of $\mathcal{H}_L$. 
It is easy to see that these operators satisfy the canonical commutation relations (CCR), i.e., for all
$x,y \in \Lambda_L$, 
\begin{equation} \label{eq:comm}
[p_x, p_y] \, = \, [q_x, q_y] \, = \, 0 \quad \mbox{ and } \quad
[q_x, p_y] \, = \, i \delta_{x,y} \idty \, .
\end{equation}
With $\{ e_j \}_{j=1}^{d}$, we denote the canonical basis vectors in $\mathbb{Z}^{d}$.
The numbers $\lambda_j \geq 0$ and $\omega > 0$ are the parameters of the system representing
the coupling strength and the on-site energy. As indicated above, $H_L$
is assumed to have periodic boundary conditions, and so we take $q_{x+e_j} = q_{x-(2L-1)e_j}$
if $x \in \Lambda_L$ but $x+ e_j \not\in \Lambda_L$. 

Let us review the finite volume Weyl algebra formalism as it applies in this context.
Set $\mathcal{D}_L = \ell^2( \Lambda_L)$. To each $f \in \mathcal{D}_L$, associate
\begin{equation} \label{eq:hweylop}
W(f) = {\rm exp} \left[ i \sum_{x \in \Lambda_L} \left( {\rm Re}[f(x)] q_x + {\rm Im}[f(x)]p_x \right) \right] \, ,
\end{equation}
a unitary operator on $\mathcal{H}_L$. It is easy to verify that both (\ref{eq:invo}) and (\ref{eq:weylrel}) 
hold for the operators $W(f)$ in (\ref{eq:hweylop}) with $\sigma$ as in (\ref{eq:hsig}); for (\ref{eq:weylrel})
use the Baker-Campbell-Hausdorff relation
\begin{equation}
e^{A+B} = e^Ae^Be^{- \frac{1}{2}[A,B]} \quad \mbox{if} \quad \left[A, \left[ A, B \right] \right] = \left[B, \left[ A, B \right] \right] =0 \, ,
\end{equation}
and the CCR. By construction, each $W(f)$ is unitary, $W(0) = \idty$, and the equality $\| W(f) - \idty \| =2$ follows
since the spectrum of $W(f)$ is the whole of $S^1$. In fact, by (\ref{eq:weylrel}), it is easy to see that
\begin{equation}
W(g) W(f) W(g)^* = e^{i {\rm Im}[ \langle f, g \rangle]} W(f) \, ,
\end{equation}
and so the spectrum is invariant under rotation. Using again Theorem 5.2.8 of \cite{bratteli1997}, 
this proves that, up to $*$-isomorphism, $\mathcal{W}( \mathcal{D}_L)$ is generated by 
Weyl operators as in (\ref{eq:hweylop}).

Since the Hamiltonian $H_L$ in (\ref{eq:harmham}) is self-adjoint, the spectral theorem guarantees that
the Heisenberg dynamics, or time evolution, $\tau_t^L$, given by
\begin{equation}
\tau_t^L(A) = e^{itH_L}A e^{-itH_L} \quad \mbox{for all } t\in \mathbb{R} \quad \mbox{and all } A \in \mathcal{B}( \mathcal{H}_L) \, ,
\end{equation}
is a well-defined, one parameter group of $*$-automorphisms. An important fact is that the harmonic
time evolution $\tau_t^L$ leaves the Weyl algebra $\mathcal{W}( \mathcal{D}_L)$ invariant; this is
proven e.g. in \cite{nachtergaele2010}, by explicitly diagonalizing $H_L$ with Fourier-type operators. In fact, the
formula 
\begin{equation} \label{eq:harminv}
\tau_t^L(W(f)) = W( T_t^L f) \, ,
\end{equation}
is verified in \cite{nachtergaele2010} with mappings $T_t^L$  given by
\begin{equation} \label{eq:deft0}
T_t^L = (U+V) \mathcal{F}^{-1} M_t \mathcal{F} (U^*-V^*) \, .
\end{equation}
Here $\mathcal{F}$ is the unitary Fourier transform on $\mathcal{D}_L$,
$M_t$ is the operator of multiplication by $e^{2i \gamma t}$,  the non-negative function $\gamma$ is
defined by 
\begin{equation} \label{eq:defgamma}
\gamma(k) \, = \,  \sqrt{ \omega^2 \, + \, 4 \sum_{j=1}^{d} \lambda_j \, \sin^2(k_j/2) }
\end{equation}
for all $k \in \Lambda_L^* \, = \, \left\{ \, \frac{x \pi}{L} \, : \, x \in \Lambda_L \, \right\} $,
and $U$ and $V$, known as  Bogoliubov transformations in the literature, see e.g.
\cite{manuceau1968}, are mappings on $\mathcal{D}_L$ given by
\begin{equation} \label{eq:defU+V}
U = \frac{i}{2} \mathcal{F}^{-1} M_{\Gamma_+} \mathcal{F} \quad \mbox{ and } \quad V = \frac{i}{2} \mathcal{F}^{-1} M_{\Gamma_-} \mathcal{F} J
\end{equation}
where $J$ is complex conjugation, and finally $M_{\Gamma_{\pm}}$ is the operator of multiplication
by
\begin{equation} \label{eq:multg}
\Gamma_{\pm}(k) = \frac{1}{\sqrt{ \gamma(k)}} \pm \sqrt{ \gamma(k)} \, ,
\end{equation}
with $\gamma(k)$ as in (\ref{eq:defgamma}). We will not review this calculation here, however, we
will use these results to define a corresponding quasi-free harmonic dynamics on $\mathcal{W}(\ell^2(\mathbb{Z}^d))$.

Let $\mathcal{W}( \ell^2(\mathbb{Z}^d))$ be 
as in Section~\ref{subsec:waqd} with $\sigma$ as in (\ref{eq:hsig}). Regard the function $\gamma$,
previously introduced in (\ref{eq:defgamma}), as a mapping $\gamma : (- \pi, \pi]^d \to \mathbb{R}$. 
Take $\mathcal{F} : \ell^2( \mathbb{Z}^d) \to L^2((-\pi, \pi]^d)$ to be the unitary Fourier transform
and set $U$ and $V$ as in (\ref{eq:defU+V}) with the appropriately extended objects.
Since $\omega >0$, it is clear that both $U$ and $V$ are bounded, real linear transformations on 
$\ell^2( \mathbb{Z}^d)$. One can check that they satisfy
\begin{equation} \label{eq:Bog}
\begin{split}
U^*U-V^*V = \idty = UU^* -VV^* \\
V^*U-U^*V=0 = VU^*-UV^*
\end{split}
\end{equation}
where it is important to note that $V^*$ is the adjoint of the anti-linear mapping $V$.

For each $t \in \mathbb{R}$, we define a mapping $T_t$ on $\ell^2(\mathbb{Z}^d)$ by setting
\begin{equation} \label{eq:deft1}
T_t = (U+V) \mathcal{F}^{-1} M_t \mathcal{F} (U^*-V^*) \, ,
\end{equation}
where, again, $M_t$ is the operator of multiplication on $L^2((-\pi, \pi]^d)$ by $e^{2i \gamma t}$; compare with (\ref{eq:deft0}).
It is easy to see that $T_t$ is a well-defined, real linear mapping. Moreover,  using (\ref{eq:Bog}) one
can readily verify that $\{ T_t \}$ satisfies the group properties  $T_0 = \idty$, $T_{s+t} = T_s \circ T_t$ and,
for each fixed $t$, $T_t$ is symplectic, i.e.,
\begin{equation}
\mbox{Im} \left[ \langle T_t f, T_t g \rangle \right] = \mbox{Im} \left[ \langle f,g \rangle \right] \, .
\end{equation}
As discussed in Section~\ref{subsec:waqd}, Theorem 5.2.8 of \cite{bratteli1997}
demonstrates  in this case the existence of a unique one parameter group of $*$-automorphisms on $\mathcal{W}( \ell^2( \mathbb{Z}^d))$, 
which we will denote by $\tau_t$, that satisfies
\begin{equation}
\tau_t(W(f)) = W(T_tf) \quad \mbox{for all } f \in \ell^2( \mathbb{Z}^d) \, .
\end{equation}
We refer to $\tau_t$ as an infinite volume 
harmonic dynamics on $\mathcal{W}( \ell^2(\mathbb{Z}^d))$.

%
%
%

\subsection{Main Results} \label{sec:mainresults}
In this section, we discuss the two main results of this paper. 
Both begin with the same observation. Using the Weyl relations,
i.e. (\ref{eq:weylrel}), it is easy to see that
\begin{eqnarray}
\left[ \tau_t(W(f)), W(g) \right]  & = & \left\{ W( T_t f) - W(g) W( T_tf) W(-g) \right\} W(g)  \nonumber \\
& = & \left\{ 1 - e^{i {\rm Im}[\langle T_tf, g \rangle]} \right\} W( T_tf) W(g) \, ,
\end{eqnarray}
and by unitarity, this shows that
\begin{equation} \label{eq:easylrb}
\left\| \left[ \tau_t(W(f)), W(g) \right]  \right\| = \left| 1 - e^{i {\rm Im}[\langle T_tf, g \rangle]} \right| \leq \left| \langle T_tf, g \rangle \right|   \, ,
\end{equation}
for all $f,g \in \ell^2( \mathbb{Z}^d)$. A direct calculation, similar to what is done in \cite{nachtergaele2009} for
the finite volume mapping $T_t^L$, shows that
\begin{equation} \label{eq:defft}
T_tf = f * \left(H_t^{(0)} - \frac{i}{2}(H_t^{(-1)} + H_t^{(1)}) \right) + \overline{f}*\left( \frac{i}{2}(H_t^{(1)} - H_t^{(-1)}) \right) \, ,
\end{equation}
where
\begin{equation}\label{eq:h}
\begin{split}
H^{(-1)}_t(x) &= \frac{1}{(2 \pi)^d} {\rm Im} \left[ \int_{(-\pi, \pi]^d} \frac{1}{ \gamma(k)} e^{i(k \cdot x-2\gamma(k)t)} \, d k \right],
\\
H^{(0)}_t(x) &= \frac{1}{(2 \pi)^d}  {\rm Re} \left[ \int_{(-\pi, \pi]^d} e^{i(k \cdot x - 2\gamma(k)t)} \, dk \right],
\\
H^{(1)}_t(x) &=  \frac{1}{(2 \pi)^d} {\rm Im} \left[ \int_{(-\pi, \pi]^d} \gamma(k) \, e^{i(k \cdot x-2\gamma(k)t)} \, dk \right]  \, .
\end{split}
\end{equation}
Combining (\ref{eq:easylrb}) and (\ref{eq:defft}), we find that
\begin{equation} \label{eq:apriest}
\left\| \left[ \tau_t(W(f)), W(g) \right]  \right\|  \leq \sum_{x,y} |f(x)| \, |g(y)| \sum_{m \in \{ -1, 0, 1 \}} |H_t^{(m)}(x-y)| \, .
\end{equation}
All our results follow from this simple estimate.

We can now state our main theorems. Each follows from a careful analysis of the 
behavior of the oscillatory integrals in (\ref{eq:h}). They differ with respect to the
class of allowable Weyl operators. Our first two results apply to all Weyl operators
generated by $f \in \ell^1( \mathbb{Z}^d)$.

\begin{thm} \label{thm:uniest} Let $d \geq 2$ and fix the parameters $\omega >0$ and $\lambda_j >0$ for
all $1 \leq j \leq d$. Denote by $\tau_t$ the harmonic dynamics defined as above on $\mathcal{W}( \ell^2( \mathbb{Z}^d))$. There exists a
number $C>0$, for which, given any $f,g \in \ell^1( \mathbb{Z}^d)$,
the estimate
\begin{equation}	\label{eq:uniestt2}
\left\| \left[ \tau_t(W(f)), W(g) \right]  \right\|  \leq  \min \left[2,  \frac{C \| f \|_1 \| g \|_1}{|t|^{1/2}} \right] \, ,
\end{equation}
holds for all $|t| \geq 1$. 
\end{thm}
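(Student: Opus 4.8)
The plan is to deduce everything from the pointwise bound (\ref{eq:apriest}) together with a single uniform decay estimate on the kernels $H_t^{(m)}$. Since the right-hand side of (\ref{eq:apriest}) has convolution structure, $\sum_{x,y}|f(x)|\,|g(y)|\,|H_t^{(m)}(x-y)|\le\|f\|_1\|g\|_1\sup_{z}|H_t^{(m)}(z)|$, so (\ref{eq:apriest}) gives
\[
\big\|\big[\tau_t(W(f)),W(g)\big]\big\|\le\|f\|_1\,\|g\|_1\sum_{m\in\{-1,0,1\}}\sup_{z\in\mathbb{Z}^d}\big|H_t^{(m)}(z)\big| .
\]
Combined with the trivial bound $\big\|\big[\tau_t(W(f)),W(g)\big]\big\|\le2$, which is immediate from (\ref{eq:easylrb}) (or from unitarity of the Weyl operators), the estimate (\ref{eq:uniestt2}) reduces to showing $\sup_z|H_t^{(m)}(z)|\le C|t|^{-1/2}$ for $m\in\{-1,0,1\}$ and all $|t|\ge1$. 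Since $\gamma$ is even we may assume $t\ge1$ throughout.

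By (\ref{eq:h}) this is an oscillatory integral estimate: for each of the three amplitudes $a\in\{1/\gamma,\,1,\,\gamma\}$ — all $C^\infty$ on the torus, since $\gamma^2$ is a trigonometric polynomial with $\gamma^2\ge\omega^2>0$ — one must prove
\[
\sup_{\xi\in\mathbb{R}^d}\Big|\int_{(-\pi,\pi]^d}a(k)\,e^{i(\xi\cdot k-2t\gamma(k))}\,dk\Big|\le C_a\,t^{-1/2},\qquad t\ge1 .
\]
The single place where $d\ge2$ is used is the geometric fact that the Hessian $D^2\gamma(k)$ vanishes at no point of the torus. This is a short computation: from $\gamma^2=\omega^2+2\sum_j\lambda_j(1-\cos k_j)$ one gets $\partial_{k_i}\gamma=\lambda_i\sin k_i/\gamma$, hence $\partial_{k_i}\partial_{k_j}\gamma=-\lambda_i\lambda_j\sin k_i\sin k_j/\gamma^3$ for $i\ne j$ and $\partial_{k_i}^2\gamma=(\lambda_i\gamma^2\cos k_i-\lambda_i^2\sin^2 k_i)/\gamma^3$. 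If two coordinates $k_i,k_j$ satisfy $\sin k_i,\sin k_j\ne0$, the mixed partial is nonzero; otherwise at most one coordinate has nonzero sine, so — and here $d\ge2$ enters — some coordinate $k_j$ lies in $\{0,\pi\}$, whence $\partial_{k_j}^2\gamma=\pm\lambda_j/\gamma\ne0$. As $k\mapsto\|D^2\gamma(k)\|$ is continuous and strictly positive on the compact torus, $\inf_k\|D^2\gamma(k)\|=:c>0$.

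The estimate then follows from van der Corput's lemma. For each $k_0$ pick a unit vector $e_{k_0}$ with $|e_{k_0}\cdot D^2\gamma(k_0)\,e_{k_0}|\ge c$; by continuity the same holds with $c/2$ on a neighborhood of $k_0$. Cover the torus by finitely many such neighborhoods, take a smooth partition of unity subordinate to the cover, and on each piece rotate coordinates so that $e_{k_0}$ becomes the first coordinate axis. The crucial point is that the linear term $\xi\cdot k$ contributes nothing to the second derivative of the phase, so along the new first coordinate line the phase has second derivative $-2t\,\partial_{u_1}^2\gamma$ with $|\partial_{u_1}^2\gamma|\ge c/2$ on the support of the cutoff; the second-derivative van der Corput lemma (which requires no monotonicity hypothesis) then bounds the inner $u_1$-integral by $C(c\,t)^{-1/2}$ times the supremum plus the $u_1$-variation of the (smooth, compactly supported) cutoff-times-amplitude, uniformly in $\xi$ and in the transverse variables. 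Integrating those out over a bounded region and summing the finitely many pieces gives the claimed $Ct^{-1/2}$ bound. The main obstacle will be this last step: getting $t^{-1/2}$ decay with full uniformity in $z$ forces one to work with the second derivatives of $\gamma$ itself rather than near the genuine, $z$-dependent critical points of the phase, which is precisely what makes the non-vanishing of $D^2\gamma$ — and hence the restriction $d\ge2$ — essential.
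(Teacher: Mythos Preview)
Your proof is correct and follows essentially the same route as the paper: reduce via (\ref{eq:apriest}) to a uniform oscillatory-integral bound, use that $D^2\gamma$ never vanishes on $\mathbb{T}^d$ (the place $d\ge2$ enters), cover by a partition of unity, and apply the second-order van der Corput estimate along a good direction, exploiting that the linear term $x\cdot k$ drops out of second derivatives so the constants are $x$-independent. The only cosmetic difference is that the paper first selects a second-order multi-index $\beta$ with $|\partial^\beta\gamma|$ bounded below and then passes to a directional derivative inside its Lemma~\ref{lem:prop5}, whereas you go directly to the direction via an eigenvector of $D^2\gamma(k_0)$.
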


As we will see in Section~\ref{sec:uniest}, Theorem~\ref{thm:uniest} follows from 
a bound on the integrals in (\ref{eq:h}) that is uniform with respect to $x \in \mathbb{Z}^d$.
In fact, our proof of Theorem~\ref{thm:uniest} uses that it is not possible for all second order partial derivatives of
$\gamma$ to vanish simultaneously. This is not the case in one dimension.
A slight modification of our argument does provide the following bound.

\begin{thm} \label{thm:uniestt1}
Fix $\omega >0$ and $\lambda >0$ and let $\tau_t$ denote the harmonic dynamics on $\mathcal{W}( \ell^2( \mathbb{Z}))$.
There exists a number $C>0$, for which, given any $f,g \in \ell^1( \mathbb{Z})$,
the estimate
\begin{equation}	\label{eq:uniestt1}
\left\| \left[ \tau_t(W(f)), W(g) \right]  \right\|  \leq \min \left[2, \frac{C \| f \|_1 \| g \|_1}{|t|^{1/3}} \right] \, ,
\end{equation}
holds for all $|t| \geq 1$.
\end{thm}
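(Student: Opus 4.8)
\medskip
\noindent\textbf{Proof proposal.}
The plan is to reduce everything to a single oscillatory–integral bound and then invoke the van der Corput lemma. Starting from the a priori estimate (\ref{eq:apriest}), I would first apply the crude bound $\sum_{x,y}|f(x)|\,|g(y)|\,|H_t^{(m)}(x-y)|\le\|f\|_1\,\|g\|_1\,\sup_{z\in\Z}|H_t^{(m)}(z)|$, which, together with the trivial bound $2$ coming from (\ref{eq:easylrb}), shows that Theorem~\ref{thm:uniestt1} follows once one proves: there is $C>0$ with
\begin{equation}\label{eq:supHbound}
\sup_{x\in\Z}\bigl|H_t^{(m)}(x)\bigr|\le\frac{C}{|t|^{1/3}}\qquad\text{for all }|t|\ge1,\ m\in\{-1,0,1\}.
\end{equation}
Each $H_t^{(m)}(x)$ is the real or imaginary part of $\frac{1}{2\pi}\int_{-\pi}^{\pi}a_m(k)\,e^{it\phi_v(k)}\,dk$, where $v:=x/t$, $\phi_v(k):=vk-2\gamma(k)$, and $a_{-1}=1/\gamma$, $a_0\equiv1$, $a_1=\gamma$. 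The point is that $\phi_v''=-2\gamma''$ and $\phi_v'''=-2\gamma'''$ do not depend on the ``velocity'' $v$, and that, since $\omega>0$, $\gamma$ is a real-analytic function on the circle with $\gamma\ge\omega$, so each $a_m$ is smooth with derivatives bounded uniformly on $(-\pi,\pi]$.

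The key step, and the place where $d=1$ enters, is the non-degeneracy claim: \emph{there is $c_0>0$ with $|\gamma''(k)|+|\gamma'''(k)|\ge c_0$ for all $k$.} I would prove this by a direct computation. Writing $\gamma^2=a-b\cos k$ with $a=\omega^2+2\lambda>b=2\lambda>0$, one finds $\gamma''=\frac14(a-b\cos k)^{-3/2}Q(k)$ with $Q(k)=-b\bigl(b\cos^2k-2a\cos k+b\bigr)$, so $\gamma''(k)=0$ exactly when $\cos k=(a-\sqrt{a^2-b^2})/b\in(0,1)$, i.e.\ at $k=\pm k_0$ with $k_0\in(0,\pi/2)$ (the other root of the quadratic in $\cos k$ lies outside $[-1,1]$). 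At such a point $Q(k_0)=0$, hence $\gamma'''(k_0)=\frac14(a-b\cos k_0)^{-3/2}Q'(k_0)$, and from $Q'(k)=2b(b\cos k-a)\sin k$ one gets $Q'(k_0)=-2b\sqrt{a^2-b^2}\,\sin k_0\ne0$; thus $\gamma''$ and $\gamma'''$ have no common zero, and the claim follows by continuity and compactness. This also confirms that $\gamma''$ genuinely vanishes ($\gamma''(0)=\lambda/\omega>0$ while $\gamma''(\pi)=-\lambda/\sqrt{\omega^2+4\lambda}<0$), which is precisely why one should expect the exponent $1/3$ rather than $1/2$ here.

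With the non-degeneracy claim in hand, I would cover $(-\pi,\pi]$ by finitely many closed subintervals $I_1,\dots,I_N$ such that on each $I_j$ either $|\gamma''|\ge c_0/2$ or $|\gamma'''|\ge c_0/2$; this is possible because $\gamma''$ vanishes only at $\pm k_0$, near which $|\gamma'''|\ge c_0/2$ by continuity. On an interval of the first type, the exponent $t\phi_v$ satisfies $|t\,\phi_v''|=2|t|\,|\gamma''|\ge c_0|t|$, so the van der Corput lemma of order $2$ bounds the corresponding piece of the integral by $C(c_0|t|)^{-1/2}\bigl(\sup_{I_j}|a_m|+\int_{I_j}|a_m'|\bigr)\le C'|t|^{-1/2}$; on an interval of the second type $|t\,\phi_v'''|\ge c_0|t|$ and the van der Corput lemma of order $3$ gives the bound $C'|t|^{-1/3}$. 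Here the amplitude factors $\sup_{I_j}|a_m|+\int_{I_j}|a_m'|$ are bounded by a constant independent of $t$ and $x$ because $\gamma\ge\omega>0$, and the estimates are uniform in $v=x/t$ since only $\phi_v''$ and $\phi_v'''$ enter. Summing the $N$ contributions yields (\ref{eq:supHbound}), and hence the theorem.

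I expect the only non-routine ingredient to be the non-degeneracy claim $|\gamma''|+|\gamma'''|\ge c_0>0$: one has to rule out a common zero of $\gamma''$ and $\gamma'''$, which is the one-dimensional counterpart of the fact used for Theorem~\ref{thm:uniest} that in $d\ge2$ the second-order partial derivatives of $\gamma$ cannot all vanish simultaneously. Everything else---the crude bound reducing to $\ell^\infty$ control of $H_t^{(m)}$, the smoothness and boundedness of the amplitudes $\gamma^m$, and the two van der Corput estimates---is standard, the only mild care being to keep the van der Corput amplitude factors uniform in $t$ and $x$.
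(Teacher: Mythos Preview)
Your proposal is correct and follows essentially the same route as the paper: reduce via (\ref{eq:apriest}) to a uniform bound on the oscillatory integrals $H_t^{(m)}$, observe that the higher derivatives of the phase are $-2t\gamma^{(n)}$ and hence independent of $x$, establish that $\gamma''$ and $\gamma'''$ have no common zero, cover the circle by finitely many pieces on which one of them is bounded below, and apply the van der Corput lemma of order $2$ or $3$ on each piece. The only cosmetic differences are that you verify the non-degeneracy by explicitly locating the zeros of $\gamma''$ and checking $\gamma'''$ there (the paper argues the contrapositive from the formulas (\ref{eq:phi2d})--(\ref{eq:phi3d})), and that you partition directly into intervals rather than use a smooth partition of unity as the paper does; in one dimension either device works equally well.
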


Theorem~\ref{thm:uniestt1} agrees with a previous result obtained by \cite{marchioro1978} in the special case that $\omega =0$.
For that case, the corresponding integrals are Bessel functions for which the above power-law behavior is sharp.

Our next result demonstrates that, for a restricted class of 
functions $f$ and $g$, one can improve the power of the time decay in (\ref{eq:uniestt2}).
This requires a more detailed analysis of the integrals in (\ref{eq:h}). 
Specifically, one can achieve better time decay with estimates depending on $x \in \mathbb{Z}^d$.
We state this result in terms of Weyl operators generated by
functions in a certain subspace of $\ell^1( \mathbb{Z}^d)$.

Introduce a weight function $w : \mathbb{Z}^d \to [1, \infty)$ by setting
\begin{equation}
w(x) = \left( 1 + \| x \|_1 \right)^{d+3}
\end{equation}
where $\| x \|_1 = \sum_j |x_j|$; the choice of power is a result of our estimates in Section~\ref{sec:mainest}. 
Let us denote by $\ell^1_w( \mathbb{Z}^d)$ the set of
all functions $f: \mathbb{Z}^d \to \mathbb{C}$ for which $fw \in \ell^1( \mathbb{Z}^d)$. 
Since $w \geq 1$, it is clear that  $\ell^1_w(\mathbb{Z}^d)$ is a subspace of $\ell^1(\mathbb{Z}^d)$, and
for any $f \in \ell^1_w(\mathbb{Z}^d)$, denote by
\begin{equation}	\label{eq:l1wnorm}
\|f\|_{1,w} = \sum_{x\in \mathbb{Z}^d} |f(x)| w(x)
\end{equation}
a norm on $\ell^1_w(\mathbb{Z}^d)$. 

We can now state our next result.
\begin{thm} \label{thm:main} Fix the parameters $\omega >0$ and $\lambda_j >0$ for
all $1 \leq j \leq d$ and denote by $\tau_t$ the harmonic dynamics defined as above on $\mathcal{W}( \ell^2( \mathbb{Z}^d))$. There exists a
number $C>0$, for which, given any $f,g \in \ell^1_w(\mathbb{Z}^d)$,
the estimate
\begin{equation} \label{eq:weightbd}
\left\| \left[ \tau_t(W(f)), W(g) \right]  \right\|  \leq \min \left[ 2, \frac{C \| f \|_{1,w} \| g \|_{1,w}}{|t|^{d/2}} \right],
\end{equation}
holds for all $|t| \geq 1$.
\end{thm}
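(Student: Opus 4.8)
\textbf{Proof proposal for Theorem~\ref{thm:main}.}

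The plan is to start from the universal estimate \eqref{eq:apriest}, which reduces everything to pointwise control of the three oscillatory integrals $H_t^{(m)}(x-y)$, $m\in\{-1,0,1\}$. From \eqref{eq:apriest},
\begin{equation}
\left\| \left[ \tau_t(W(f)), W(g) \right]  \right\| \leq \sum_{x,y} |f(x)|\,|g(y)| \sum_{m\in\{-1,0,1\}} \bigl| H_t^{(m)}(x-y) \bigr| .
\end{equation}
The key claim I would isolate as the main lemma of Section~\ref{sec:mainest} is a \emph{pointwise, $x$-dependent} bound of the shape
\begin{equation}
\bigl| H_t^{(m)}(x) \bigr| \leq \frac{C\, (1+\|x\|_1)^{d+3}}{|t|^{d/2}} \qquad \text{for all } x\in\mathbb{Z}^d,\ |t|\geq 1,
\end{equation}
for each $m$. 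Granting this, the theorem follows immediately: bound $(1+\|x-y\|_1)^{d+3}\leq (1+\|x\|_1)^{d+3}(1+\|y\|_1)^{d+3} = w(x)w(y)$ using subadditivity of $\|\cdot\|_1$ and monotonicity, so the double sum factors as $\bigl(\sum_x |f(x)|w(x)\bigr)\bigl(\sum_y |g(y)|w(y)\bigr) = \|f\|_{1,w}\|g\|_{1,w}$, and combining with the a priori bound \eqref{eq:aprioriest} gives the $\min[2,\,\cdot\,]$.

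The heart of the matter is thus the stationary-phase analysis of
\begin{equation}
\int_{(-\pi,\pi]^d} a(k)\, e^{i(k\cdot x - 2\gamma(k)t)}\, dk ,
\end{equation}
with amplitude $a\in\{1,\ 1/\gamma,\ \gamma\}$ (all smooth and bounded with bounded derivatives since $\omega>0$) and phase $\phi(k) = k\cdot x/t - 2\gamma(k)$ after factoring out $|t|$. Here I would split into two regimes. For $|x| \gtrsim |t|$ (a ``nonstationary'' regime where $|\nabla\gamma|$ is bounded so $|\nabla\phi|\gtrsim |x|/|t|$), repeated integration by parts in $k$ yields decay $|H_t^{(m)}(x)| \leq C_N |t|^{-N}(|x|/|t|)^{-N}$ for any $N$, which is far stronger than needed; choosing $N$ large absorbs the polynomial weight. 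For the remaining regime $|x|\lesssim |t|$, the standard $d$-dimensional stationary phase / van der Corput estimate gives decay $|t|^{-d/2}$ provided the Hessian of $\gamma$ is nondegenerate at the stationary points. The curvature of $\gamma$ given by \eqref{eq:defgamma} is not globally nondegenerate, however — $\det(\mathrm{Hess}\,\gamma)$ vanishes on a lower-dimensional set — so one cannot simply invoke the clean $|t|^{-d/2}$ stationary phase theorem. This is the main obstacle. The way around it, and the reason the weight $w$ appears, is to tile $(-\pi,\pi]^d$ into $O(|t|^{d/2})$-many cubes of sidelength $\sim |t|^{-1/2}$, estimate each cube's contribution by $O(|t|^{-d/2})$ using the trivial bound on small cubes together with a one-dimensional van der Corput argument in the direction where some derivative of $\gamma$ is nonvanishing (exploiting, as in Theorem~\ref{thm:uniest}, that not all partials of $\gamma$ can vanish at once), and then observe that the oscillation in $k\cdot x$ lets one sum the cube contributions with only a polynomial-in-$|x|$ loss — the $(1+\|x\|_1)^{d+3}$ factor coming from controlling how many cubes fail to exhibit the favorable phase behavior and from the derivatives of the amplitude times $e^{ik\cdot x}$, whose $C^{d+1}$-norm grows like $(1+\|x\|_1)^{d+1}$, with a couple of extra powers to make a discrete sum over cubes converge.

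Concretely, the order of steps I would carry out: (i) reduce to the main lemma as above; (ii) handle the nonstationary regime $\|x\|_1 \geq c|t|$ by iterated integration by parts, getting arbitrary polynomial decay; (iii) in the regime $\|x\|_1 < c|t|$, rescale to isolate the phase $t\,\psi_{x/t}(k)$ and record the structure of the stationary set $\{\nabla\gamma(k) = x/(2t)\}$ and where $\mathrm{Hess}\,\gamma$ degenerates; (iv) prove a one-dimensional $|t|^{-1/2}$ van der Corput bound with amplitude-derivative control, keeping explicit the dependence on $\|D a\|$ and hence on $\|x\|_1$ through the factor $e^{ik\cdot x}$; (v) iterate this $d$ times across the cube decomposition (or apply a higher-dimensional sublevel-set estimate) to assemble $|t|^{-d/2}$ times a polynomial weight, and check that the weight is dominated by $(1+\|x\|_1)^{d+3}$; (vi) combine the two regimes. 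I expect step (v) — propagating the one-dimensional van der Corput bound through the degeneracies of $\gamma$'s Hessian while tracking the precise polynomial loss in $\|x\|_1$ — to be the technically heaviest part, and the exact exponent $d+3$ in $w$ will be dictated by the bookkeeping there.
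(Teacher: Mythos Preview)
Your reduction via \eqref{eq:apriest} and the shape of the main lemma are exactly right, and the factoring $(1+\|x-y\|_1)^{d+3}\le C\,w(x)w(y)$ is what the paper does as well. The gap is in how you propose to prove the pointwise bound on the oscillatory integral.

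You have misidentified the main obstacle. You treat the full $\phi_{t,x}(k)=k\cdot x-2t\gamma(k)$ as the phase, so the stationary points move with $x/t$ and can land where $\mathrm{Hess}\,\gamma$ degenerates; this forces you into the cube decomposition and iterated one-dimensional van der Corput, a program whose step (v) you yourself flag as uncertain (and indeed, summing $O(|t|^{d/2})$ cubes each of size $O(|t|^{-d/2})$ gives $O(1)$, not $O(|t|^{-d/2})$, so something sharper than the sketch is needed there). The paper avoids all of this by a simple shift of viewpoint: write
\[
\int_{\mathbb{T}^d} e^{i\phi_{t,x}(k)}\eta(k)\,dk=\int_{\mathbb{T}^d} e^{-2it\gamma(k)}\bigl[e^{ik\cdot x}\eta(k)\bigr]\,dk
\]
and treat $e^{ik\cdot x}\eta(k)$ as the \emph{amplitude}. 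Now the phase is $-2t\gamma$ alone, whose critical points are the $2^d$ fixed points with $k_j\in\{0,\pi\}$; at each of these the Hessian is diagonal with entries $\lambda_j\cos(k_j)/\gamma(k)\neq 0$, hence \emph{nondegenerate}, and the Morse Lemma applies cleanly. Away from these $2^d$ points, repeated integration by parts in the $\gamma$-phase gives $|t|^{-N}$ for any $N$, with constants depending on up to $N$ derivatives of the amplitude $e^{ik\cdot x}\eta$, i.e.\ on $(1+\|x\|_1)^{N}$; taking $N=d+3$ suffices. Near each critical point, the Morse change of variables reduces to a quadratic phase $Q(y)=\sum_j\epsilon_j y_j^2$; the leading Gaussian term gives the clean $|t|^{-d/2}$ with an $x$-independent constant, and the Taylor remainder of the amplitude contributes $|t|^{-(d+1)/2}$ times a constant depending on $d+3$ derivatives of $e^{ik\cdot x}\eta$, which is where the factor $(1+\|x\|_1)^{d+3}$ arises. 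This is the content of Theorem~\ref{thm:xbd}, and it replaces your entire steps (iii)--(v).
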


As we show in Section~\ref{sec:mainest}, Theorem~\ref{thm:main} follows from an application of Theorem~\ref{thm:xbd}.
It is interesting to apply the result of Theorem~\ref{thm:xbd} to some particularly simple Weyl operators. For any $x \in \mathbb{Z}^d$,
consider the function $\delta_x: \mathbb{Z}^d \to \mathbb{R}$ given by $\delta_x(y) = 0$ if $y \neq x$ and $\delta_x(x) =1$. A direct
application of Theorem~\ref{thm:xbd} shows that for any $x \in \mathbb{Z}^d$,
\begin{equation}
\left\| \left[ \tau_t(W( \delta_0 )), W( \delta_x) \right]  \right\|  \leq  \frac{C}{|t|^{d/2}} \left(1 + \frac{\|x\|_1^{d+3}}{|t|^{1/2}} \right)
\end{equation}
for all $|t| \geq 1$. In this case, any choice of $x =x(t)$ satisfying 
\begin{equation}
\frac{\| x(t) \|_1^{d+3}}{|t|^{1/2}} = O(1) \quad \mbox{as } |t| \to \infty \, ,
\end{equation}
will have a commutator that decays like $|t|^{-d/2}$. 
Combining this result with the Lieb-Robinson bound, e.g. (\ref{eq:lrb2}), we get the following decay diagram in the $\{ \| x \|_1 ,t\}$ - space:

\begin{figure}[ht]
\begin{picture}(160,160)
\put(-20,10){\vector(1,0){200}}											
\put(185,10){$t$}
\put(80,0){\vector(0,1){140}}
\put(62,143){$\|x\|_1$}
\put(45,125){exp.}
\put(90,125){decay}
\qbezier(180,40)(90,40)(80,10)									
\qbezier(-20,40)(70,40)(80,10)
\put(145,45){$\|x\|_1 = |t|^{1/(2(d+3))}$}
\put(-70,22){$|t|^{-d/2}$-decay}
\put(-40,80){$|t|^{-1/2}$-decay}
\thicklines
\qbezier(80,10)(80,10)(152,127)										
\qbezier(8,127)(80,10)(80,10)
\end{picture}
\caption{Decay of $\left\| \left[ \tau_t(W( \delta_0 )), W( \delta_x) \right]  \right\| $}
\label{fig1}
\end{figure}

\begin{rem}
We expect the decay rate $|t|^{-d/2}$ in the bottom region of the picture to be sharp. However, the rate $|t|^{-1/2}$ as well as the boundary equation $\|x\|_1 = |t|^{1/(2(d+3))}$ are not sharp in all dimensions. In particular, further studies currently in progress show that in dimension two both the rate and the boundary can be improved.    
\end{rem}

As a final remark, we note that although each of the theorems above assume that
$\lambda_j >0$ for all $1 \leq j \leq d$, it is easy to see what happens if some of these are zero.
Fix the dimension $d$ and take parameters $\omega >0$ and
$\lambda_j \geq 0$ for all $1 \leq j \leq d$. For clarity, denote by $H^{(m)}(x; d) = H^{(m)}(x)$ for any $m \in \{ -1, 0, 1\}$
and $x \in \mathbb{Z}^d$ to stress the dimension dependence. Consider the set 
\begin{equation}
A = \{ 1 \leq j \leq d : \lambda_j > 0 \} \, .
\end{equation}
Since $\gamma$ is independent of $k_j$ if $j \notin A$, it is clear that
\begin{equation} \label{eq:hdegen}
H_t^{(m)}(x; d) = H_t^{(m)}( x_A; |A|) \cdot \prod_{j \notin A} \delta_0(x_j) \quad \mbox{for } m \in \{ -1, 0, 1 \} \, .
\end{equation}
Here $x_A$ is the order-preserving, restriction of $x \in \mathbb{Z}^d$ to $\mathbb{Z}^{|A|}$ and $\delta_0 : \mathbb{Z} \to \{ 0, 1\}$ satisfies
$\delta_0(x) = 1$ if and only if $x=0$. Analogues of Theorem~\ref{thm:uniest} and Theorem~\ref{thm:main}, now immediately follow in the
case that some of the couplings are zero.

%
%
%
%

\section{Proof of Theorems \ref{thm:uniest} and \ref{thm:uniestt1}} \label{sec:uniest}

In this section, we will prove Theorems~\ref{thm:uniest} and \ref{thm:uniestt1}.
The section is organized as follows. 
First, we state Theorem~\ref{thm:unibd} below which, for dimensions $d \geq 2$, provides a uniform estimate 
on oscillatory integrals of the type arising in the definition of the harmonic dynamics. 
An immediate consequence of Theorem~\ref{thm:unibd} is Theorem~\ref{thm:uniest}.
Although Theorem~\ref{thm:unibd} does not apply in one dimension, the argument can be
modified to prove a similar bound. We state this result as Theorem~\ref{thm:unibd1d}, and Theorem~\ref{thm:uniestt1}
readily follows.  

We begin with some notation. Fix parameters $\om >0$ and $\la_j > 0$ for $1 \leq j \leq d$. 
For each $t \in \mathbb{R}$ and $x \in \mathbb{Z}^d$, introduce a function $\phi_{t,x} : \mathbb{R}^d \to \mathbb{R}$ by setting
\begin{equation} \label{eq:defphi}
\phi_{t, x}(k) =  k\cdot x  -2 t \gamma(k),
\end{equation}
where, as in \eqref{eq:defgamma},
\begin{equation} \label{eq:defgam2}
 \gamma(k) = \sqrt{\om^2 + 4\sum_{j=1}^d \la_j\sin^2(k_j/2)} \, .
\end{equation}
Here we have written $k_j$ for the $j$-th component of $k \in \mathbb{R}^d$. 
For each $t \in \mathbb{R}$ and $x \in \mathbb{Z}^d$, the function $e^{i \phi_{t,x}}$ is
$2 \pi$-periodic with respect to each component, and so we may regard it 
as a function on the compact torus $\mathbb{T}^d = (- \pi, \pi]^d$.
Let $\| \cdot \|_1$ denote the norm on $L^1(\mathbb{T}^d)$. The following estimate holds.

\begin{thm} \label{thm:unibd}
Fix $d \geq 2$. 
Let  $\eta \in C^1(\mathbb{T}^d)$ and take $\phi_{t, x}$ as defined in \eqref{eq:defphi} above.
There exists a number $C$, independent of $t$ and $x$, for which
\begin{equation} \label{eq:uniest2}
\abs{\int_{\mathbb{T}^d} e^{i\phi_{t, x}(k)}\eta(k)\,dk} \leq \frac{C}{|t|^{1/2}}  \left( \| \nabla \eta \|_1 + \| \eta \|_1 \right) \, ,
\end{equation}
for all $|t| \geq 1$.
\end{thm}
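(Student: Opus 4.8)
The plan is to bound the oscillatory integral by a standard stationary-phase / non-stationary-phase dichotomy, exploiting the fact that in dimension $d \geq 2$ the phase $\phi_{t,x}$ never has a genuinely degenerate critical point of the worst kind: although $\nabla \phi_{t,x}$ can vanish, the Hessian of $\gamma$ cannot be identically zero at any point, so along at least one coordinate direction we retain curvature of order one. Concretely, I would first reduce to proving the estimate with $\eta$ replaced by a smooth cutoff: since $\eta \in C^1$, write $\eta = \eta \chi_\delta + \eta(1-\chi_\delta)$ where $\chi_\delta$ is a smooth bump localizing near the (bounded) set of critical points of $k \mapsto k\cdot x - 2t\gamma(k)$; on the region where $\nabla\phi_{t,x} \neq 0$ one integrates by parts once in the direction of $\nabla \phi_{t,x}/|\nabla\phi_{t,x}|^2$, and the resulting boundary-free term is controlled by $\frac{C}{|t|}(\|\nabla\eta\|_1 + \|\eta\|_1)$ provided the cutoff scale $\delta$ is chosen $\asymp |t|^{-1/2}$, which is exactly the threshold that balances the two contributions.

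\medskip
Near the critical set, the contribution is handled by the classical van der Corput lemma applied in one well-chosen variable. The key algebraic input is that $\partial_{k_j}^2 \gamma(k) = \frac{\partial_{k_j}^2(4\lambda_j \sin^2(k_j/2))}{2\gamma} - \frac{(\partial_{k_j}\gamma)^2}{\gamma}$, and one computes that $4\lambda_j\sin^2(k_j/2)$ has second derivative $2\lambda_j\cos(k_j)$; since the $\lambda_j$ are strictly positive and $d \geq 2$, at any point $k$ at least one coordinate $k_j$ satisfies $|\cos(k_j)| \geq 1/2$, hence (after absorbing the lower-order cross terms, which are where the argument needs care) one secures a lower bound $|\partial_{k_j}^2 \phi_{t,x}(k)| = 2|t|\,|\partial_{k_j}^2\gamma(k)| \geq c|t|$ on a neighborhood of that point, uniformly in $x$. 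Applying the van der Corput estimate (the second-derivative version, which gives a gain of $|t|^{-1/2}$ with a constant times the total variation of the amplitude in that variable) in the variable $k_j$, and integrating trivially in the remaining $d-1$ variables over the torus, yields the bound $\frac{C}{|t|^{1/2}}(\|\eta\|_\infty + \|\partial_{k_j}\eta\|_1)$ on this piece, and since $\|\eta\|_\infty$ on a torus is itself controlled via the fundamental theorem of calculus by $\|\eta\|_1 + \|\nabla\eta\|_1$, the claimed form follows.

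\medskip
The main obstacle, I expect, is the uniformity in $x$ combined with the non-compactness of the ``non-stationary'' region: as $\|x\|$ grows, the gradient $\nabla\phi_{t,x}(k) = x - 2t\nabla\gamma(k)$ becomes large everywhere (since $\nabla\gamma$ is bounded), which actually helps the integration-by-parts region, but one must be careful that the single-integration-by-parts estimate does not pick up $x$-dependence — it does not, because each derivative of $1/|\nabla\phi_{t,x}|^2$ only improves with large $|x|$ or large $|t|$. The genuinely delicate point is the patching: one needs a partition of the torus, with cardinality bounded independently of $t$ and $x$, into pieces on each of which either $|\nabla\phi_{t,x}|$ is bounded below by $c\,|t|^{1/2}$ (so one integration by parts suffices) or some $|\partial_{k_j}^2\phi_{t,x}|$ is bounded below by $c|t|$ (so van der Corput applies); producing such a partition from a compactness argument on $\mathbb{T}^d$, with the constant $c$ and the number of pieces uniform in $x$ — using that the ``bad'' set where both conditions fail is empty because $\gamma$ is real-analytic with nonvanishing Hessian trace — is the crux of the proof, and is presumably where the hypothesis $d \geq 2$ is essential (in $d=1$ the single second derivative $2\lambda\cos k$ vanishes at $k = \pm\pi/2$, forcing the weaker $|t|^{-1/3}$ rate of Theorem~\ref{thm:uniestt1} via a third-derivative van der Corput argument instead).
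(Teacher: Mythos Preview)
Your approach has a genuine gap and is also more complicated than necessary.

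\medskip
\textbf{The gap.} Your pigeonhole claim ``at any point $k$ at least one coordinate $k_j$ satisfies $|\cos(k_j)| \geq 1/2$'' is simply false (take $k_j = \pi/2$ for all $j$). More importantly, the underlying strategy of securing a lower bound on some \emph{pure} second partial $\partial_{k_j}^2\gamma$ fails: in $d=2$ with $\omega = \lambda_1 = \lambda_2 = 1$, at the symmetric point $k_1 = k_2 = \arccos\bigl((5-\sqrt{13})/6\bigr)$ one checks that \emph{both} diagonal entries $\partial_{k_1}^2\gamma$ and $\partial_{k_2}^2\gamma$ vanish. What saves the day is the \emph{mixed} partial $\partial_{k_1}\partial_{k_2}\gamma = -\lambda_1\lambda_2\sin(k_1)\sin(k_2)/\gamma^3$, which is nonzero there. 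The paper's argument is precisely this: if a mixed partial vanishes then $\sin(k_i)=0$ for some $i$, forcing $\cos(k_i)=\pm 1$ and hence $\partial_{k_i}^2\gamma \neq 0$; so at every point \emph{some} multi-index $\beta$ of order two has $\partial^\beta\gamma \neq 0$. To exploit a mixed second derivative one then writes it as a combination of directional second derivatives and applies van der Corput along a suitable line --- this is the content of the paper's Lemma~\ref{lem:prop5}.

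\medskip
A second error: the inequality $\|\eta\|_\infty \lesssim \|\eta\|_1 + \|\nabla\eta\|_1$ you invoke is the Sobolev embedding $W^{1,1}\hookrightarrow L^\infty$, which fails for $d \geq 2$. This can be sidestepped by using the sharp van der Corput bound on compactly supported amplitudes (the boundary term $|\eta(b)|$ drops out), but as written it is wrong.

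\medskip
\textbf{The unnecessary complication.} Your stationary/non-stationary dichotomy forces the partition to track the $x$-dependent critical set of $\phi_{t,x}$, and you correctly flag the resulting uniformity issue as ``the genuinely delicate point.'' The paper avoids this entirely with one clean observation: every second-order derivative $\partial^\beta\phi_{t,x} = -2t\,\partial^\beta\gamma$ is \emph{independent of $x$}. Hence one can choose, once and for all, a finite cover of $\mathbb{T}^d$ by balls on each of which a fixed $\partial^{\beta}\gamma$ is bounded away from zero (by compactness and the nonvanishing fact above), take a subordinate partition of unity, and apply the order-two van der Corput estimate on each piece. No first-derivative region, no $x$-dependent cutoffs, no balancing of scales --- the whole proof is a single application of the second-derivative estimate, uniformly in $x$ for free.
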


Since the number $C$ is independent of $x \in \mathbb{Z}^d$, Theorem~\ref{thm:uniest} follows from Theorem~\ref{thm:unibd}.

\begin{proof}[Proof of Theorem~\ref{thm:uniest}]
Each Weyl operator $W(f)$ is unitary and thus (\ref{eq:aprioriest}), which also applies in this case, 
demonstrates an upper bound of $2$ for all $t$.
Moreover, using \eqref{eq:h}, it is clear that for any $x \in \mathbb{Z}^d$ and $m \in \{ -1, 0 , 1 \}$,
\begin{equation}
|H_t^{(m)}(x)| \le \frac{1}{(2 \pi)^d}\left| \int_{\mathbb{T}^d} e^{i \phi_{t,x}(k)} \, \gamma^m(k) \,  dk \right| \, .
\end{equation}
In this case, Theorem~\ref{thm:uniest} follows immediately from Theorem~\ref{thm:unibd}.
\end{proof}

We next prove Theorem~\ref{thm:unibd}.

\begin{proof}[Proof of Theorem~\ref{thm:unibd}] 
Our proof of Theorem~\ref{thm:unibd} applies the analysis of Chapter VIII, Sections 1 and 2 of \cite{Stein}
to the oscillatory integral in (\ref{eq:uniest2}) above. Some details are provided to demonstrate that
the prefactors are indeed independent of $x$.

Theorem~\ref{thm:unibd} will follow from an application of Lemma~\ref{lem:prop5} found in Appendix \ref{app:inineq}.
To see that it applies, first observe that
\begin{equation}
 \frac{\partial^2\phi_{t, x}(k)}{\partial k_j\partial k_i} = -2t \frac{\partial^2 \gamma(k)}{\partial k_j\partial k_i} \quad \mbox{for all } 1 \leq i,j \leq d \, .
\end{equation}
We claim that for every $k \in \mathbb{T}^d$, there exists a multi-index $\beta = \beta(k)$ of order 2 with $\partial^\beta \gamma(k) \ne 0.$
In fact, it is easy to see that 
 \begin{equation} \label{eq:gamma2d}
\gamma^3(k) \frac{\partial^2\gamma(k)}{\partial k_j\partial k_i} = \left\{ \begin{array}{cc} \la_j \omega_j \cos(k_j) - \la_j^2 (1- \cos(k_j))^2 & \mbox{if } i = j \, , \\
-\la_j \la_i \sin(k_j) \sin(k_i) & \mbox{otherwise,} \end{array} \right.
\end{equation}
where we have set
\begin{equation}
\om_j = \om^2 + 4\sum_{i\ne j} \la_i\sin^2(k_i/2).
\end{equation}
Hence, if there is a pair $i \neq j$ for which the above mixed derivative is zero, then the second partial derivative with respect to either $k_i$ or
$k_j$ does not vanish. For each $k \in \mathbb{T}^d$, let $\beta(k)$ denote a multi-index of order 2 for which $| \partial^{\beta(k)} \gamma(k)|$ is maximal. 

We now cover the torus with a collection of balls. Fix $k_0 \in \mathbb{T}^d$. It is clear that
for sufficiently small $r>0$,
\begin{equation} \label{eq:gammader}
\abs{ \partial^{\beta(k_0)} \gamma(k_0)} < 2 \abs{ \partial^{\beta(k_0)} \gamma(k)} \quad \mbox{for all } k \in B_r( k_0)  \, ,
\end{equation}
where $B_r(k_0)$ is the ball in $\mathbb{T}^d$ centered at $k_0$ with radius $r$.
Set $r(k_0)$ to be the supremum over all $0 < r \leq 1$ for which (\ref{eq:gammader}) holds.

By compactness, a finite collection of balls $B_{r(k)}(k)$ cover $\mathbb{T}^d$.
Let us index this collection by a finite set $\mathcal{N} \subset \mathbb{T}^d$, i.e. each $n \in \mathcal{N}$
corresponds to a ball $B_{n}$, centered at $n \in \mathbb{T}^d$, and the multi-index $\beta( n)$ is
well-defined.

Set
\begin{equation}
M = \min_{ n \in \mathcal{N}}\inf_{k \in B_{n}} 2 \left| \partial^{\beta(n)} \gamma(k) \right| \, .
\end{equation}
By construction, $M>0$ and obviously, $M$ is independent of $x$.

Let $\{ f_{n} \}_{ n \in \mathcal{N}}$ be a partition of unity subordinate to $\{ B_{ n} \}_{ n \in \mathcal{N}}$, i.e.,
choose $f_{ n} \in C^{\infty}(\mathbb{T}^d)$ with $0 \leq f_{ n}(k) \leq 1$, ${\rm supp}(f_{ n}) \subset B_{n}$, and 
\begin{equation}
\sum_{n \in \mathcal{N}} f_{ n}(k) = 1 \quad \mbox{for all} \quad k \in \mathbb{T}^d \, .
\end{equation}
It is clear then that
\begin{equation} \label{eq:partitionineq1}
\left| \int_{\mathbb{T}^d} e^{i\phi_{t, x}(k)}\eta(k)\,dk \right| \leq \sum_{ n \in \mathcal{N}} \left| \int_{\mathbb{T}^d} e^{i\phi_{t, x}(k)}\eta_{ n}(k)\,dk\right|
\end{equation}
where we have set $\eta_{n} = f_{ n} \eta$. 

Since each $f_n$, and thereby each $\eta_n$, has support on a ball with radius less than one,
it is convenient to regard the integrands above as functions on $\mathbb{R}^d$ with support on
a connected ball whose radius is also less than one. Given this, each of the integrals 
above can be estimated using Lemma~\ref{lem:prop5} in the Appendix. 

To see this, fix $n \in \mathcal{N}$. Identify $n \in \mathbb{T}^d$ with it's representative in $(- \pi, \pi]^d$. 
If, upon identification, the ball $B_n \subset (-\pi, \pi]^d$ as a subset of $\mathbb{R}^d$, then
extend $\eta_n$ to $\mathbb{R}^d \setminus B_n$ by zero. In this case, we write
\begin{equation}
\int_{\mathbb{T}^d} e^{i \phi_{t, x}(k)} \eta_n(k) \, dk = \int_{\mathbb{R}^d} e^{i \phi_{t, x}(k)} \eta_n(k) \, dk  \, ,
\end{equation}
with a slight abuse of notation.

Otherwise, under identification, the ball $B_n$ is not a subset of $(- \pi, \pi]^d$.
In this case, denote by $\tilde{\eta}_n$ the periodic extension of $\eta_n$ to $\mathbb{R}^d$.
Let $\tilde{B}_n$ denote the ball in $\mathbb{R}^d$ centered at $n \in (- \pi, \pi]^d$ with radius equal
to that of $B_n$. It is clear that
\begin{equation}
\int_{\mathbb{T}^d} e^{i \phi_{t, x}(k)} \eta_n(k) \, dk = \int_{\tilde{B}_n} e^{i \phi_{t, x}(k)} \tilde{\eta}_n(k) \, dk  \, .
\end{equation}
With another slight abuse of notation, we will re-designate $\eta_n$ to be the
function on $\mathbb{R}^d$ that extends $\tilde{\eta}_n |_{\tilde{B}_n}$ to $\mathbb{R}^d$
by zero and declare its support to be $B_n$. 

With this understanding, for each $n \in \mathcal{N}$, we have that ${\rm supp}( \eta_{n}) \subset B_{n}$ 
and moreover, 
\begin{equation}
\abs{\partial^{\beta( n)}\phi_{t, x}(k)} \geq M|t| \quad \text{ for all } k \in B_{n} \, .
\end{equation}
Clearly then, Lemma~\ref{lem:prop5} applies and demonstrates that
\begin{eqnarray} 
\left| \int_{ \mathbb{T}^d} e^{i\phi_{t, x}(k)}\eta(k)\,dk \right| & \leq & \sum_{n \in \mathcal{N}} \frac{C_n}{ \sqrt{|t|} } \left(   \int_{\mathbb{R}^d} \abs{ \nabla \eta_{n}(k) } \, dk  + 
\int_{\mathbb{R}^d} \abs{\eta_{n}(k)} \, dk  \right) \nonumber \\ 
& = & \frac{C}{ \sqrt{|t|}}  \left( \left\| \nabla \eta \right\|_1 + \| \eta \|_1 \right) ,
\end{eqnarray}
proving \eqref{eq:uniest2}. 
\end{proof}

In one dimension, it is possible for the second derivative of $\gamma$ to vanish in $(- \pi, \pi]$, and thus, the proof of
Theorem~\ref{thm:unibd} does not apply. There is, however, a related bound.

\begin{thm} \label{thm:unibd1d}
Let  $\eta \in C^1(\mathbb{T}^1)$ and take $\phi_{t, x}$ be as defined in \eqref{eq:defphi} above.
There exists a number $C$, independent of $t$ and $x$, for which 
\begin{equation} \label{eq:uniest1d}
\abs{\int_{\mathbb{T}^1}e^{i\phi_{t, x}(k)}\eta(k)\,dk} \leq \frac{C}{|t|^{1/3}} \left( \| \eta' \|_1 + \| \eta \|_1 \right)
\end{equation}
whenever $|t| \geq 1$. 
\end{thm}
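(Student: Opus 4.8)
The plan is to mimic the proof of Theorem~\ref{thm:unibd} but with the second-order van der Corput estimate replaced by a third-order one, since in one dimension $\gamma''$ can vanish but $\gamma'''$ cannot. First I would establish the nonvanishing claim: compute $\gamma', \gamma'', \gamma'''$ explicitly from $\gamma(k) = \sqrt{\omega^2 + 4\lambda\sin^2(k/2)}$ and check that for every $k \in \mathbb{T}^1$ at least one of $|\gamma''(k)|, |\gamma'''(k)|$ is strictly positive. (The only way $\gamma''$ can vanish is at isolated inflection points; at such a point one verifies $\gamma''' \neq 0$ by a direct computation. A clean way to phrase this: $\gamma^2$ is a trigonometric polynomial of degree $1$ in $\cos k$, which is real-analytic and non-constant since $\lambda > 0$, so $\gamma$ is real-analytic and non-affine, hence not all of its derivatives vanish at any single point; and a compactness/continuity argument then bounds the order needed by $3$.) For each $k_0 \in \mathbb{T}^1$ let $\beta(k_0) \in \{2,3\}$ be the smallest order with $\gamma^{(\beta(k_0))}(k_0) \neq 0$.

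Next I would run the same covering-and-partition-of-unity argument verbatim as in Theorem~\ref{thm:unibd}: around each $k_0$ choose a ball $B_{r(k_0)}(k_0) \subset \mathbb{T}^1$ of radius $\le 1$ on which $|\gamma^{(\beta(k_0))}(k)| > \tfrac12 |\gamma^{(\beta(k_0))}(k_0)|$, extract a finite subcover indexed by $\mathcal{N}$, set $M = \min_{n\in\mathcal{N}} \inf_{k\in B_n} 2|\gamma^{(\beta(n))}(k)| > 0$, take a subordinate partition of unity $\{f_n\}$, write $\eta_n = f_n\eta$, and handle the torus-versus-$\mathbb{R}$ identification of the supports exactly as before. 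Since $\partial^{\beta(n)}\phi_{t,x} = -2t\,\gamma^{(\beta(n))}$, on each $B_n$ we have $|\partial^{\beta(n)}\phi_{t,x}(k)| \ge M|t|$ with $\beta(n) \in \{2,3\}$.

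The one genuine difference is that I now need a van der Corput lemma valid for $\beta = 3$, not just $\beta = 2$, and I expect this — together with the fact that a third-derivative bound only yields $|t|^{-1/3}$ — to be where the loss in the exponent comes from and the main point to get right. The standard one-dimensional van der Corput estimate (Stein, Chapter VIII, Prop.~2, or the Appendix lemma, which presumably is stated for general order $k \ge 2$ with the caveat that for $k=1$ one also needs monotonicity) gives, for a $C^1$ amplitude $\eta_n$ supported in an interval of length $\le 1$,
\begin{equation}
\left| \int_{\mathbb{R}} e^{i\phi_{t,x}(k)}\eta_n(k)\,dk \right| \le C_n |t|^{-1/\beta(n)} \left( \|\eta_n'\|_1 + \|\eta_n\|_1 \right) \le C_n |t|^{-1/3}\left( \|\eta_n'\|_1 + \|\eta_n\|_1 \right)
\end{equation}
for $|t| \ge 1$, using $\beta(n) \le 3$ and $|t| \ge 1$ to replace $|t|^{-1/2}$ by $|t|^{-1/3}$ when $\beta(n) = 2$. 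Summing over the finitely many $n \in \mathcal{N}$ and using $\sum_n(\|\eta_n'\|_1 + \|\eta_n\|_1) \le C(\|\eta'\|_1 + \|\eta\|_1)$ (from $0 \le f_n \le 1$ and boundedness of $\sum_n \|f_n'\|_\infty$) yields \eqref{eq:uniest1d}. As with Theorem~\ref{thm:unibd}, the only subtlety is tracking that every constant — the radii $r(k_0)$, the subcover $\mathcal{N}$, $M$, the $C_n$ — depends on $\omega, \lambda$ only and not on $x$ or $t$, which is immediate since all of them are built from $\gamma$ alone.
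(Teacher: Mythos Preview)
Your proposal is correct and follows essentially the same approach as the paper: both verify by direct computation that $\gamma''$ and $\gamma'''$ cannot vanish simultaneously, cover $\mathbb{T}^1$ by finitely many intervals on which one of these derivatives is bounded below, apply a partition of unity, invoke the one-dimensional van der Corput lemma of order $2$ or $3$ on each piece, and use $|t|^{-1/2} \le |t|^{-1/3}$ for $|t|\ge 1$ to combine. Two small corrections: your opening clause ``$\gamma'''$ cannot [vanish]'' is false (it vanishes at $k=0,\pi$), though you immediately restate the claim correctly; and your ``clean way'' aside is not valid as written, since real-analyticity plus compactness only gives a \emph{finite} uniform order of nonvanishing, not specifically order $3$---you really do need the direct computation you mention first, which the paper carries out explicitly via formulas for $\phi''$ and $\phi'''$.
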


It is clear from the argument immediately following Theorem~\ref{thm:unibd} that Theorem~\ref{thm:uniestt1} follows from
Theorem~\ref{thm:unibd1d}. We now prove Theorem~\ref{thm:unibd1d}.

\begin{proof}[Proof of Theorem~\ref{thm:unibd1d}] 
We begin by observing that in one dimension, the second and third derivatives of $\phi_{x,t}$ 
cannot simultaneously vanish. Although it is also true that, for sufficiently
large $t$, the first and second derivatives cannot both be zero, the dependence of the first on
$x$ prevents uniform estimates.

A short calculation shows that
\begin{equation} \label{eq:phi2d}
\frac{\partial^2\phi_{t, x}(k)}{\partial k^2} = - 2\la^2 t \frac {\frac{\om^2}{\la}\cos(k) - (1- \cos(k))^2}{\gamma^3(k)}
\end{equation}
and
\begin{equation} \label{eq:phi3d}
\frac{\partial^3\phi_{t, x}(k)}{\partial k^3}  = \frac {2\la\sin(k) t}{\gamma(k)}\left(1 + 3\la^2\frac {\frac{\om^2}{\la}\cos(k) - (1 - \cos(k))^2}{\gamma^4(k)}\right).
\end{equation}
Using (\ref{eq:phi2d}) and (\ref{eq:phi3d}) above, it is clear that if the third derivative
vanishes, then the second is non-zero. Moreover, neither of these derivatives 
depend on $x$. 

Fix $k_0 \in \mathbb{T}^1$ and denote by $\gamma^{(i)}$ the $i$-th derivative of $\gamma$. 
{F}rom the above, it is clear that at least one of $\gamma^{(2)}(k_0)$ and $\gamma^{(3)}(k_0)$ is non-zero. 
For $i =2,3$, define $r_i(k_0) = 0$ if $\gamma^{(i)}(k_0) = 0$. Otherwise, there exists $r_i>0$ for which
\begin{equation} \label{eq:gammider}
\abs{ \gamma^{(i)}(k)} > 0 \quad \mbox{for all } k \in B_{r_i}(k_0) \ .
\end{equation}
In this case, set $r_i(k_0)$ to be the supremum over all $0 < r_i \leq 1$ such that (\ref{eq:gammider}) holds and
take $2r(k_0) = \max[r_2(k_0), r_3(k_0)]$. By construction, it is clear that $r(k) >0$ for all
$k \in \mathbb{T}^1$.

We now proceed as in the proof of Theorem~\ref{thm:unibd}. By compactness, a finite collection of balls with the form $B_{r(k)}(k)$ cover $\mathbb{T}^1 $.
Let us denote this collection by $\{ B_{n} \}_{n \in \mathcal{N}}$.
It is convenient to denote by
\begin{equation}
\mathcal{N}_2 = \left\{ n \in \mathcal{N} : |\gamma^{(2)}(k) | >0 \, \mbox{ for all } k \in \overline{B_{n}} \right\} \, ,
\end{equation}
and $\mathcal{N}_3 = \mathcal{N} \setminus \mathcal{N}_2$. With
\begin{equation}
B^i = \bigcup_{ n \in \mathcal{N}_i} \overline{B_{n}} \, \quad \mbox{ for } i = 2, 3, 
\end{equation}
the quantities
\begin{equation}
M_i = \inf_{k \in B^i} 2 \left|\gamma^{(i)}(k) \right| \, ,
\end{equation}
are strictly positive and independent of $x \in \mathbb{Z}$.

Let $\{ f_{n} \}_{n \in \mathcal{N}}$ be a partition of unity subordinate to $\{ B_{n} \}_{n \in \mathcal{N}}$, and note
that 
\begin{equation} \label{eq:partitionineq}
\left| \int_{ \mathbb{T}^1} e^{i\phi_{t, x}(k)}\eta(k)\,dk \right| \leq \sum_{n \in \mathcal{N}} \left| \int_{\mathbb{T}^1} e^{i\phi_{t, x}(k)}\eta_{n}(k)\,dk\right|
\end{equation}
where we have set $\eta_{n} = f_{n} \eta$. As before, see proof of Theorem~\ref{thm:unibd}, it is convenient to regard the integrands above as
functions supported on connected balls in $\mathbb{R}$.

There are two cases. First, suppose $n \in \mathcal{N}_2$. In this case, it is clear that
\begin{equation}
\abs{\frac{\partial^2}{ \partial k^2} \phi_{t,x}(k)} = \abs{2 t \gamma^{(2)}(k)} \geq |t| M_2 > 0 \quad \mbox{for all} \quad k \in B_{n} \, .
\end{equation}
Since ${\rm supp}( \eta_{n}) \subset B_{n}$,
Lemma~\ref{lem:1dest} applies and 
\begin{eqnarray}
\left| \int_{B_n} e^{i\phi_{t, x}(k)}\eta_{n}(k)\,dk\right| & \leq & \frac{c_2}{\sqrt{|t| M_2}} \int_{B_n} \abs{ \eta'_{n}(k)} \, dk  \, \nonumber \\
& \leq & \frac{C}{ \sqrt{|t|}} \left( \| \eta \|_1 + \| \eta' \|_1 \right) \, .
\end{eqnarray}

Now for $n \in \mathcal{N}_3$, we have that
\begin{equation}
\abs{\frac{\partial^3}{ \partial k^3} \phi_{t,x}(k)} = \abs{2 t \gamma^{(3)}(k)} \geq |t| M_3 > 0 \quad \mbox{for all} \quad k \in B_{n} \, .
\end{equation}

Another application of Lemma~\ref{lem:1dest} shows that 
\begin{equation}
\left| \int_{B_n} e^{i\phi_{t, x}(k)}\eta_{n}(k)\,dk\right| \leq \frac{C'}{\sqrt[3]{|t|}} \left( \|\eta \|_1 + \| \eta' \|_1 \right) \, .
\end{equation}
The bound (\ref{eq:uniest1d}) readily follows for $|t| \geq 1$.
\end{proof}

%
%
%
%

\section{Proof of Theorem \ref{thm:main} } \label{sec:main}

Our proof of Theorem~\ref{thm:main} uses a general estimate on oscillatory integrals whose
phase is given by the function $\phi_{t,x}$ defined in (\ref{eq:defphi}).
In the beginning of Section~\ref{sec:mainest} we state this result, see Theorem~\ref{thm:xbd}, and
show how Theorem~\ref{thm:main} follows. 
The remainder of Section~\ref{sec:mainest} contains the
proof of Theorem~\ref{thm:xbd}. One part of the proof uses the Morse
Lemma.
For the sake of completeness, we briefly discuss this application in Section~\ref{sec:Morse}.

\subsection{The Main Estimates} \label{sec:mainest}

Before stating Theorem~\ref{thm:xbd}, we introduce some notation which will be
used throughout this section. Let $f: \mathbb{R}^d \to \mathbb{C}$ be smooth. For any
$1 \leq j \leq d$ and any $N \geq 1$ set
\begin{equation} \label{eq:fnorm}
\| f \|_{j, N} = \max_{0 \leq k \leq N} \| \partial_j^{(k)} f \|_{\infty} \, ,
\end{equation}
where we have denoted by $\partial_j^{(k)}$ the $k$-th partial derivative of $f$ 
with respect to the $j$-th component. Recall that for each $t \in \mathbb{R}$ and
$x \in \mathbb{Z}^d$ we have defined functions $\phi_{t,x}$ and $\gamma$ as in
(\ref{eq:defphi}) and (\ref{eq:defgam2}) respectively.

\begin{thm} \label{thm:xbd} 
Let $\eta \in C^{d+3}(\mathbb{T}^d)$ and take $\phi_{t, x}$ as defined in \eqref{eq:defphi}. 
There exists a number $C$, independent of $t$ and $x$, for which
\begin{equation} \label{eq:xintbd}
\abs{\int_{ \mathbb{T}^d} e^{i\phi_{t, x}(k)}\eta(k)\,dk} \leq  C \left( \frac{ \| \eta \|_{\infty}}{|t|^{d/2}} + \sup_{\beta : | \beta | \leq d+3} \| \partial^{\beta} \eta \|_{\infty} \cdot \frac{(3+ \| x \|_1)^{d+3}}{|t|^{(d+1)/2}} \right) 
\end{equation}
for all $|t| \geq 1$.
\end{thm}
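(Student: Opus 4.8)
The plan is to combine a crude reduction in the size of $\|x\|_1$ with the method of stationary phase, the essential trick being to apply the Morse Lemma to $\gamma$ \emph{itself} rather than to the full phase $\phi_{t,x}$, so that the entire $x$-dependence is carried by an oscillatory amplitude of the form $e^{ik\cdot x}$. First I would dispose of large $\|x\|_1$ and small $|t|$ by the trivial bound $\abs{\int_{\mathbb{T}^d}e^{i\phi_{t,x}(k)}\eta(k)\,dk}\le (2\pi)^d\|\eta\|_\infty$: if $\|x\|_1\ge |t|^{(d+1)/(2(d+3))}$ then $(3+\|x\|_1)^{d+3}|t|^{-(d+1)/2}\ge 1$ and this bound is dominated by the second term of \eqref{eq:xintbd}, while for $1\le |t|\le T_0$ (any fixed $T_0$) it is dominated by the first term. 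So I may assume $|t|\ge T_0$ and $\|x\|_1<|t|^{(d+1)/(2(d+3))}$, whence $s:=x/(2t)$ satisfies $\|s\|_\infty\le\tfrac12|t|^{-(d+5)/(2(d+3))}$, which is as small as we wish once $T_0$ is large. The point of this reduction is that the stationary points of $\phi_{t,x}$ are precisely the solutions of $\nabla\gamma(k)=s$, and for $\|s\|$ small these all lie in small balls around the $2^d$ critical points of $\gamma$, namely $\mathcal{C}=\{k_*\in\mathbb{T}^d:\ k_{*,j}\in\{0,\pi\}\ \text{for all }j\}$.

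Next I would note that $\gamma$ is a Morse function: by \eqref{eq:gamma2d}, at each $k_*\in\mathcal{C}$ the off-diagonal Hessian entries of $\gamma$ vanish (since $\sin k_{*,j}=0$) and the $j$-th diagonal entry equals $\gamma(k_*)^{-3}\bigl(\lambda_j\omega_j\cos k_{*,j}-\lambda_j^2(1-\cos k_{*,j})^2\bigr)$, which is $\gamma(k_*)^{-3}\lambda_j\omega_j>0$ if $k_{*,j}=0$ and $-\gamma(k_*)^{-3}(\lambda_j\omega_j+4\lambda_j^2)<0$ if $k_{*,j}=\pi$; in particular $\mathrm{Hess}\,\gamma(k_*)$ is nondegenerate. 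I then fix a smooth partition of unity $\{\chi_{k_*}\}_{k_*\in\mathcal{C}}\cup\{\chi_\infty\}$ on $\mathbb{T}^d$, with $\mathrm{supp}\,\chi_{k_*}$ a small ball $B_\rho(k_*)$ carrying a Morse chart for $\gamma$ and $\mathrm{supp}\,\chi_\infty\subset\{\mathrm{dist}(k,\mathcal{C})\ge\rho/2\}$, and split the integral. On $\mathrm{supp}\,\chi_\infty$ one has $|\nabla\gamma|\ge c_\rho>0$, so $|\nabla\phi_{t,x}|=2|t|\,|\nabla\gamma-s|\ge c_\rho|t|$ once $\|s\|<c_\rho/2$, while all $k$-derivatives of $\nabla\gamma-s$ are bounded uniformly in $t,x$; integrating by parts $d+3$ times with $L=(i|\nabla\phi_{t,x}|^2)^{-1}\,\nabla\phi_{t,x}\cdot\nabla$ (no boundary terms, by periodicity) gains a full power of $|t|$ at each step and bounds this piece by $C\,\|\chi_\infty\eta\|_{C^{d+3}}\,|t|^{-(d+3)}\le C\sup_{|\beta|\le d+3}\|\partial^\beta\eta\|_\infty\,|t|^{-(d+1)/2}$, which lies within the claimed bound and uses only the $d+3$ derivatives of $\eta$ that are assumed.

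On each ball $B_\rho(k_*)$ I would apply the Morse Lemma (cf.\ Section~\ref{sec:Morse}) to get a $C^\infty$ diffeomorphism $\Psi_*$ from a ball $B\ni 0$ onto $B_\rho(k_*)$ with $\Psi_*(0)=k_*$ and $\gamma(\Psi_*(y))=\gamma(k_*)+\tfrac12\sum_j\epsilon_j^*y_j^2$, $\epsilon_j^*=\pm1$. Changing variables, the $k_*$-piece becomes $e^{-2it\gamma(k_*)}\int_{\mathbb{R}^d}e^{-it\sum_j\epsilon_j^*y_j^2}\,b_{*,x}(y)\,dy$ with $b_{*,x}(y):=e^{ix\cdot\Psi_*(y)}a_*(y)$, where $a_*(y):=\chi_{k_*}(\Psi_*(y))\,\eta(\Psi_*(y))\,|\det D\Psi_*(y)|$ is $C^{d+3}$ and compactly supported in a set independent of $x$. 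I would then isolate a quadratic stationary-phase lemma: for $b\in C^{d+3}$ compactly supported and signs $\epsilon_j$, one has $\abs{\int_{\mathbb{R}^d}e^{-it\sum_j\epsilon_jy_j^2}b(y)\,dy-c_\epsilon|t|^{-d/2}b(0)}\le C\,\|b\|_{C^{d+3}}\,|t|^{-d/2-1}$ for $|t|\ge1$; this follows by Parseval from the Fresnel Gaussian transform $\widehat{e^{-it\epsilon y^2}}(\xi)=\sqrt{\pi/(it\epsilon)}\,e^{i\xi^2/(4t\epsilon)}$, writing each factor $e^{i|\xi|^2/(4t\epsilon)}$ as $1$ plus a remainder of size $\le|\xi|^2/(4|t|)$ and using $|\widehat b(\xi)|\le C\,\|b\|_{C^{d+3}}\,\langle\xi\rangle^{-(d+3)}$ (compact support) together with $\int|\xi|^2\langle\xi\rangle^{-(d+3)}\,d\xi<\infty$. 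Applied to $b_{*,x}$ this gives $|b_{*,x}(0)|\lesssim\|\eta\|_\infty$, and a Leibniz/Fa\`a-di-Bruno expansion of $\partial^\beta(e^{ix\cdot\Psi_*}a_*)$ for $|\beta|\le d+3$ yields $\|b_{*,x}\|_{C^{d+3}}\lesssim(1+\|x\|_1)^{d+3}\|a_*\|_{C^{d+3}}\lesssim(3+\|x\|_1)^{d+3}\sup_{|\beta|\le d+3}\|\partial^\beta\eta\|_\infty$, because each $y$-derivative of $e^{ix\cdot\Psi_*(y)}$ brings out a factor $x\cdot(\text{bounded derivative of }\Psi_*)=O(\|x\|_1)$. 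Summing the $2^d$ resulting estimates $C\|\eta\|_\infty|t|^{-d/2}+C(3+\|x\|_1)^{d+3}\sup_{|\beta|\le d+3}\|\partial^\beta\eta\|_\infty|t|^{-d/2-1}$ together with the $\chi_\infty$-piece, and using $|t|^{-d/2-1}\le|t|^{-(d+1)/2}$ for $|t|\ge1$, produces \eqref{eq:xintbd}.

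The conceptual heart, and the step I expect to be the main obstacle, is recognizing that the Morse Lemma should be applied to $\gamma$ and not to $\phi_{t,x}$: this keeps the Morse chart independent of both $x$ and $t$, so every constant coming from the change of variables is uniform, while the whole $x$-dependence is funneled into the amplitude $e^{ix\cdot\Psi_*(y)}$, whose $C^{d+3}$ norm grows precisely like $(1+\|x\|_1)^{d+3}$ — this is exactly why the weight in Theorem~\ref{thm:main} carries the power $d+3$ and why $\eta\in C^{d+3}$ is the natural hypothesis. The remaining difficulties are routine bookkeeping: verifying nondegeneracy of $\mathrm{Hess}\,\gamma$ at all $2^d$ points of $\mathcal{C}$ (done above), and confirming that the reduction $\|x\|_1<|t|^{(d+1)/(2(d+3))}$ genuinely pushes every stationary point of $\phi_{t,x}$ into the Morse charts, so that the complementary region is truly non-stationary and the integration-by-parts argument there is legitimate.
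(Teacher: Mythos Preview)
Your proposal is correct and shares the paper's central insight: apply the Morse Lemma to $\gamma$ rather than to $\phi_{t,x}$, so that the chart is $(t,x)$-independent and all $x$-dependence sits in the amplitude $e^{ix\cdot\mathcal{D}^{-1}(y)}$, whose $C^{d+3}$ norm grows like $(1+\|x\|_1)^{d+3}$. The execution, however, differs in two places worth noting.

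First, the paper does not make your preliminary reduction $\|x\|_1<|t|^{(d+1)/(2(d+3))}$. On the complement of the Morse balls it writes the integrand as $e^{-2it\gamma(k)}\bigl(e^{ik\cdot x}\eta_n(k)\bigr)$, treating $-2t\gamma$ as the phase and $e^{ik\cdot x}\eta_n$ as the amplitude; a one-dimensional integration-by-parts lemma (Lemma~\ref{l:outside_critical2}) along a coordinate with $|\partial_j\gamma|\ge\alpha>0$ then gives $|t|^{-N}$ decay for \emph{all} $x$, with the $x$-dependence entering only through $\|\partial_j^{(\ell)}(e^{ik\cdot x}\eta_n)\|$ and producing the same $(1+\|x\|_1)^{N}$ factor. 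This is why the paper never needs to locate the stationary points of $\phi_{t,x}$ or to restrict $\|x\|_1$. Second, for the quadratic stationary phase the paper does not use Parseval and the Fresnel kernel; instead it inserts and removes a Gaussian $e^{-|y|^2}$ and splits the resulting integral as $I_1+I_2+I_3$: $I_1$ is the exact Gaussian integral (Lemma~\ref{lem:Qest}), $I_2$ is supported away from the origin and decays like $|t|^{-N}$ (Lemma~\ref{lem:Qetaest}), and $I_3$ carries the Taylor remainder $G_n(y;x)-G_n(0;x)=\sum_j y_j\tilde G_{n,j}$ and is handled by a dedicated estimate (Lemma~\ref{lem:gen}) yielding $|t|^{-(d+1)/2}$. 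Your Parseval route is shorter and even gives a slightly better remainder order $|t|^{-d/2-1}$; the paper's route is more self-contained in that it avoids tempered-distribution Fourier analysis and the initial case split on $\|x\|_1$.
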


Here for each $x \in \mathbb{Z}^d$ we have denoted by $\| x \|_1 = \sum_{j=1}^d | x_j|$, the norm on $\eta$ and $\partial^{\beta} \eta$ corresponds to $L^{\infty}( \mathbb{T}^d)$, and the supremum is over all multi-indices $\beta$ whose order satisfies $| \beta| \leq d+3$.
We now show that Theorem~\ref{thm:main} follows from Theorem~\ref{thm:xbd}.

\begin{proof}[Proof of Theorem~\ref{thm:main}]
We argue as in the proof of Theorem~\ref{thm:uniest}. Using (\ref{eq:apriest}) and Theorem~\ref{thm:xbd}, we conclude
that
\begin{equation}
\left\| \left[ \tau_t(W(f)), W(g) \right]  \right\|  \leq  \frac{C}{|t|^{d/2}}\sum_{x,y} |f(x)| \, |g(y)| \left(1 + (3+\|x-y\|_1)^{d+3} \right)
\end{equation}
for all $|t| \geq 1$. Since
\begin{equation}
(3+\|x-y\|_1)^{d+3} \le C' \left(1+\|x\|_1\right)^{d+3}\left(1+\|y\|_1\right)^{d+3},
\end{equation}
it is clear that (\ref{eq:weightbd}) follows.
\end{proof}

\begin{proof}[Proof of Theorem~\ref{thm:xbd}]
As in Section~\ref{sec:uniest}, we apply the analysis of Chapter VIII, Sections 1 and 2 of
\cite{Stein} to the oscillatory integral in (\ref{eq:xintbd}).

For large $|t|$, the leading order contribution in the estimate (\ref{eq:xintbd}) 
will be determined by the critical points of $\gamma$, i.e., 
those $k \in (- \pi, \pi]^d$ for which $\nabla \gamma (k) = 0$. 
Since
\begin{equation}
\frac{ \partial \gamma}{ \partial k_j} (k) = \lambda_j \frac{ \sin(k_j)}{ \gamma(k)} \quad \mbox{for all } 1 \leq j \leq d \, ,
\end{equation}
it is clear that there are $2^d$ such critical points; 
namely those $k$ with $k_j \in \{0, \pi \}$ for all $1 \leq j \leq d$. 
Let us denote the set of critical points of $\gamma$ as $\Upsilon$.

We now partition the torus $\mathbb{T}^d$ so that these critical points lie in distinct balls.
In Section~\ref{sec:Morse}, we show that each of these critical points is non-degenerate, and
therefore, the Morse Lemma applies. Fix $0<r \leq 1$ as described in Section~\ref{sec:Morse} and 
for each $k \in \Upsilon$, denote by $B_r(k) \subset \mathbb{T}^d$ the open ball of radius $r$ centered at $k$. 
Set
\begin{equation}
B_r(\Upsilon) = \bigcup_{k \in \Upsilon} B_r(k) \, ,
\end{equation}
and observe that for each $k \in \mathbb{T}^d \setminus B_{r}(\Upsilon)$, there exists a number $1 \leq j \leq d$ such that 
\begin{equation} \label{eq:radlwbd}
\min_{m\in\{-1, 0, 1\}}\abs{k_j-m\pi} \ge r/\sqrt{d}.
\end{equation}
In this case, there exists a number $\delta_k>0$ such that 
\begin{equation}\label{eq:compineq}
\min_{m\in\{-1, 0, 1\}}\abs{k'_j-m\pi} \ge r/(2\sqrt{d}) \quad \mbox{for all } k' \in B_{\delta_k}(k) \, .
\end{equation}
Set $\delta(k)$ to be the supremum over all $0< \delta_k \leq 1$ such that given (\ref{eq:radlwbd}), (\ref{eq:compineq}) holds.
By compactness, the set $\mathbb{T}^d\setminus B_{r}(\Upsilon)$
can be covered by finitely many such balls. 
This construction covers the torus by a finite union of balls;
label them as $\{ B_n \}_{n \in \mathcal{N}}$.

Let $\{ f_n \}_{n \in \mathcal{N}}$ be a partition of unity subordinate to this cover and estimate
\begin{align*}
\abs{\int_{\mathbb{T}^d} e^{i \phi_{t,x}(k)} \, \eta(k) \, dk} & \leq
\sum_{n \in\mathcal{N}} \abs{\int_{\mathbb{T}^d} e^{i \phi_{t,x}(k)}
\, \eta_n( k) dk } \, ,
\end{align*}
where we again set $\eta_n = f_n \eta$. For the remainder of this argument, we will, as in the proof of Theorem~\ref{thm:unibd}, regard the
integrands above as functions on $\mathbb{R}^d$ supported on a connected ball with radius less than one. 

To prove (\ref{eq:xintbd}), we need only estimate the integrals above
for each $n \in \mathcal{N}$; here we are using that the cardinality of 
$\mathcal{N}$ depends only on the function $\gamma$ and, in particular, it is independent of $x$ and $t$. There are two cases. 
First, let $n \in \mathcal{N}$ correspond to one of the balls covering $\mathbb{T}^d \setminus B_r( \Upsilon)$, i.e.
a function $\eta_n$ whose support does not include a critical point of $\gamma$. In this case,
by (\ref{eq:compineq}) above, there exists a coordinate $k_j$ and a number $\alpha >0$ for which
\begin{equation}
\abs{\frac{ \partial \gamma}{ \partial k_j} (k')} =  \lambda_j \abs{\frac{ \sin(k_j')}{ \gamma(k')}} \geq \alpha \quad \mbox{for all } k' \in B_n \, .
\end{equation}
Here $\alpha$ can be made explicit in terms of $\omega$, $\{ \lambda_j \}$, $r>0$, and $d$, but most importantly, it
is independent of both $t$ and $x$. An application of Lemma~\ref{l:outside_critical2} shows that
\begin{equation}
\abs{ \int_{\mathbb{R}^d} e^{i \phi_{t,x}(k)} \eta_n(k) \,dk } = \abs{ \int_{\mathbb{R}^d} e^{- i 2t \gamma(k)} e^{ikx} \eta_n(k) \,dk }  \leq \frac{ C_N(\gamma, e^{i \cdot x} \eta_n( \cdot) )}{|t|^N} \, ,
\end{equation}
for any $N \geq 1$ and $t \neq 0$. As the proof of Lemma~\ref{l:outside_critical2} demonstrates,
\begin{eqnarray}
C_N(\gamma, e^{i \cdot x} \eta_n( \cdot) ) & = & \sum_{\ell =0}^N \int_{\mathbb{R}^d} \alpha^{\ell-2N} \abs{ \partial_j^{(\ell)} e^{ikx} \eta_n(k)} \abs{P_{N,\ell}(k)} \, dk \nonumber \\
& \leq & \frac{1}{ \alpha^{2N}} \sum_{\ell =0}^N \alpha^{\ell} \sum_{m=0}^{\ell} {\ell \choose m} |x_j|^m \int_{B_n} \abs{ \partial_j^{(\ell-m)} \eta_n(k)} \abs{P_{N,\ell}(k)} \, dk \nonumber \\
& \leq &  C_{N}(\gamma, n) \| \eta \|_{j, N} (1 + \| x \|_{\infty})^{N} \, .
\end{eqnarray}
Note that quantity $\| \eta \|_{j,N}$ in the final line above is as in (\ref{eq:fnorm}) with norms in $L^{\infty}( \mathbb{T}^d)$. 
Moreover, $\| x \|_{\infty} \leq \| x \|_1$, and for any $d \geq 1$, $d+1 \leq 2(d+3)$. Thus, choosing $N = d+3$, we get
an estimate of the form (\ref{eq:xintbd}) for all such $n$.

Next, we consider those $n \in \mathcal{N}$ for which the support of $\eta_{n}$ contains a critical
point. Denote by $k_n$ the unique element of $\mbox{supp}(\eta_n) \cap \Upsilon$. By construction, namely
the initial choice of $r>0$, $\gamma$ satisfies a Morse Lemma at  $k_n$ throughout the ball $B_n$.  

Translating to $k_n$, it is clear that
\begin{eqnarray} \label{eq:inttrans}
\abs{ \int_{\R^d} e^{i \phi_{t,x}(k)} \eta_n(k) \, dk} & = & \abs{ \int_{\R^d} e^{-2it \left( \gamma(k+k_n) - \gamma(k_n) \right)} e^{i k \cdot x} \, \eta_n(k + k_n) \,dk} \, \nonumber \\
& = & \abs{ \int_{\R^d} e^{-2it Q(y)} e^{i x\cdot \mathcal{D}^{-1}(y)} \, \eta_n(\mathcal{D}^{-1}(y) + k_n)\abs{\det( \nabla \mathcal{D}^{-1}(y))} dy} \, \nonumber \\
& = & \abs{ \int_{\R^d} e^{-2it Q(y)} e^{-|y|^2} G_n(y;x) \, dy} \, ,
\end{eqnarray}
where, for the second equality above, we changed variables $y = \mathcal{D}(k)$ according to the Morse Lemma, see Section~\ref{sec:Morse}, and
denoted the resulting quadratic function by 
\begin{equation}
Q(y) = \sum_{j=1}^d \epsilon_j y_j^2 \, \quad \mbox{with } \epsilon_j \in \{ \pm 1 \} \, .
\end{equation}
For the final equality in (\ref{eq:inttrans}) above, we inserted and removed a gaussian, and set 
\begin{equation} \label{eq:defg}
G_n(y;x) = e^{|y|^2} e^{i x\cdot \mathcal{D}^{-1}(y)} \, \eta_n(\mathcal{D}^{-1}(y) + k_n)\abs{\det( \nabla \mathcal{D}^{-1}(y))}  \, .
\end{equation} 
Since $\mathcal{D}$ is a diffeomorphism, the determinant above does not change sign. 
Without loss of generality, we assume it is positive for the calculations below.
Moreover, we recall that $\mathcal{D}(0) = 0$.

We need only estimate the integral on the right-hand side of (\ref{eq:inttrans}) above.
To do so, we re-write it as a sum of three.
Fix $h \in C_0^{\infty}( \mathbb{R}^d)$ satisfying
$h(y) = 1$ for all  $y \in \mbox{supp}(G_n) \cup B_1(0)$.
{F}rom \eqref{eq:inttrans}, it is clear that 
\begin{equation}			\label{eq:I123}
\abs{ \int_{\R^d} e^{i \phi_{t,x}(k)} \eta_n(k) \, dk} \leq I_1(t) + I_2(t) + I_3(t;x),
\end{equation}
where
\begin{eqnarray}
I_1(t) &= & \abs{G_{n}(0;x)} \abs{\int_{\R^d} e^{-2 i t Q(y)} e^{-\abs{y}^2} dy}, \nonumber \\
I_2(t) &= & \abs{G_{n}(0;x)} \abs{\int_{\R^d} e^{-2 i t Q(y)} e^{-\abs{y}^2} (h(y) - 1)dy},   \\
I_3(t;x) & = & \abs{ \int_{\R^d} e^{-2 i t Q(y)} e^{-\abs{y}^2} \left(G_{n}(y;x) -G_{n}(0;x) \right) h(y) dy} \, . \nonumber
\end{eqnarray}
The proof of Theorem~\ref{thm:xbd} is complete when we show that each of the
integrals above satisfy an estimate of the form (\ref{eq:xintbd}).

$I_1$ and $I_2$ are easy to bound. In fact, since $\mathcal{D}(0) = 0$, (\ref{eq:defg}) implies that
\begin{equation}
\abs{G_n(0;x)} = \frac{ | \eta_n(k_n)|  }{ \abs{\det( \nabla \mathcal{D}(0))}} \, ,
\end{equation}
and thus, as the notation implies, both $I_1$ and $I_2$ are independent of $x$. In this case,
an application of Lemma~\ref{lem:Qest} shows that 
\begin{equation} \label{eq:I1bd}
I_1(t) \le  \frac{ \| \eta \|_{\infty} }{ \abs{\det( \nabla \mathcal{D}(0))}}  \left( \frac{\pi}{2|t|} \right)^{d/2} \quad \mbox{for all  } t \neq 0 \, ,
\end{equation}
while, for any integer $N \geq 1$, Lemma~\ref{lem:Qetaest} yields a bound of the form
\begin{equation}	\label{eq:I2bd}
I_2(t) \le \frac{ \| \eta \|_{\infty} }{ \abs{\det( \nabla \mathcal{D}(0))}}   \frac {C_N}{|t|^N} \, ,
\end{equation}
for all $t \neq 0$. The bound on $I_1$ produces the first term in (\ref{eq:xintbd}) and
by taking $N=d+3$, it is clear that $I_2$ also satisfies (\ref{eq:xintbd}). 

We need only estimate the integral $I_3$, and for this we use Lemma~\ref{lem:gen}. 
To see that it is applicable, we Taylor expand $G_{n}(y;x)$. Write
\begin{eqnarray}  \label{taylor}
G_{n}(y;x) & = &  G_{n}(0; x) + \sum_{j = 1}^d \int_0^{y_j} \partial_j G_{n}(y_1,.., y_{j-1}, s, 0, ..,0;x) ds \nonumber \\
& = & G_{n}(0; x) + \sum_{j = 1}^d y_j \tilde{G}_{n, j}(y;x) \, ,
\end{eqnarray}
where we have written $\partial_j G_{n}$ for the partial derivative of $G_{n}$ with respect to $y_j$ and set
\begin{equation} \label{eq:deftg}
\tilde{G}_{n, j}(y;x) = \frac {1}{y_j} \int_0^{y_j} \partial_j G_{n} (y_1,.., y_{j-1}, s, 0, ..,0;x) ds
\end{equation}
if $y_j \neq 0$, and $\partial_jG_{n}(y_1, ..., y_{j-1}, 0, ..., 0;x)$ otherwise. Clearly, then
\begin{eqnarray}		\label{eq:I3bd0}
I_3(t;x) & \leq & \sum_{j=1}^d \abs{\int_{\R^d} e^{-2 i t Q(y)} e^{-\abs{y}^2} y_j \tilde{G}_{n, j}(y;x) h(y) dy} \nonumber \\
& \leq & \frac{C}{|t|^{(d+1)/2}} \sum_{j=1}^d \left( \| \tilde{G}_{n,j} h \|_{\infty} + \sum_{k=1}^d \| \mathcal{G}_{j,k} \|_{k, d+2} \right) \, ,
\end{eqnarray}
for all $|t| \geq 1$. Here we have set $\mathcal{G}_{j,k}(y) = e^{- y_k^2} \tilde{G}_{n,j}(y) h(y)$ 
as in Lemma~\ref{lem:gen}. The quantities on the right-hand side of (\ref{eq:I3bd0})  depend on $x$. We will now prove
that there exists a number $C'$, independent of $x$ and $t$, for which
\begin{equation} \label{eq:I3bd}
I_3(t,x) \leq C' \sup_{\beta : | \beta| = d+3} \| \partial^{\beta} \eta \|_{\infty} \frac{(3+ \| x \|_1)^{d+3}}{|t|^{(d+1)/2}} \, .
\end{equation}
Given (\ref{eq:I3bd}), the proof of Theorem~\ref{thm:xbd} is complete.

The proof of (\ref{eq:I3bd}) will be performed in two steps. First, we estimate the partial derivatives of $G_{n}$ in terms of $x$ and $\eta$. Second, we estimate partial derivatives of $\mathcal{G}_{j,k}$ in terms of the partial derivatives of $G_{n}$. The combination of the two steps will give us the necessary bound.

Let us write
\begin{equation}
G_{n}(y;x) = g_1(y;x) g_2(y) \, 
\end{equation}
where
\begin{equation}
g_1(y;x) = e^{i x \cdot \mathcal{D}^{-1}(y)} \eta( \mathcal{D}^{-1}(y) + k_n) \quad \mbox{and} \quad g_2(y) = f_{n}(\mathcal{D}^{-1}(y) + k_n) e^{|y|^2} {\rm det}( \nabla \mathcal{D}^{-1}(y)) \, .
\end{equation}
The derivatives we must bound are of the form
\begin{equation}
\partial_k^{(\ell)} \partial_j G_{n}(y;x) = \sum_{m=0}^{\ell} {\ell \choose m} \left[ \partial_k^{(m)} \partial_j g_1(y;x) \partial_k^{(\ell -m)} g_2(y) +  \partial_k^{(m)} g_1(y;x) \partial_k^{(\ell -m)} \partial_j g_2(y) \right] \, .
\end{equation}
A short calculation shows then that
\begin{eqnarray}
\left\| \partial_k^{(\ell)} \partial_j G_{n} \right\|_{\infty} & \leq &  C_{k,j}( \ell)  \sum_{m=0}^{\ell} {\ell \choose m} \left[ \left\| \chi_n \partial_k^{(m)} \partial_j g_1\right\|_{\infty} +  \left\| \chi_n \partial_k^{(m)} g_1\right\|_{\infty} \right]  \nonumber \\
& \leq & C'_{k,j}( \ell) (2+ \| x \|_1)^{\ell +1}  \left( \| \eta \|_{k, \ell} + \| \partial_j \eta \|_{k, \ell} \right)  \, .
\end{eqnarray}
In the intermediate step the notation $\chi_n$ stands for the characteristic function of the ball $B_n$.

We can now estimate the right-hand side of (\ref{eq:I3bd0}) term by term. Clearly, 
\begin{equation}
\| \tilde{G}_{n, j} h \|_{\infty} \leq \| \partial_j G_{n} \|_{\infty} \| h \|_{\infty} \leq C_j \| \eta \|_{j,1} \left(1 + \| x \|_1 \right) \, .
\end{equation}

The bound on $\mathcal{G}_{j, k}$ depends on $j$ and $k$.
In fact, by construction one sees that $\tilde{G}_{n, j}$ depends only on those coordinates $k$
with $1 \leq k \leq j$. In particular,
\begin{equation}
\partial_k^{(\ell)} \tilde{G}_{n, j}(y;x) = 0 \quad \mbox{for any} \quad j < k \leq d \quad \mbox{and} \quad \ell \geq 1 \, .
\end{equation}
Thus if $j < k \leq d$,
\begin{equation}
\| \mathcal{G}_{j,k} \|_{k, d+2} \leq \| \tilde{G}_{n, j} \|_{\infty} \| e^{- y_k^2} h \|_{k, d+2} \leq C_k \| \eta \|_{j,1} \left(1 + \| x \|_1 \right)  \, .
\end{equation}

If, on the other hand, $1 \leq k <j$, then
\begin{equation}
\left\| \partial_k^{(\ell)}\tilde{G}_{n, j} \right\|_{\infty} \leq  \left\| \partial_k^{(\ell)} \partial_j G_{n} \right\|_{\infty} \, ,
\end{equation}
and so
\begin{eqnarray} \label{eq:tgbd}
\| \mathcal{G}_{j,k} \|_{k, d+2}  & \leq & \max_{0 \leq \ell \leq d+2}  \sum_{m=0}^{\ell} { \ell \choose m} C_{k, \ell}(m) \| \partial_k^{(m)} \tilde{G}_{n, j} \|_{\infty} \nonumber \\
& \leq & C \left(  \| \eta \|_{k, d+2} + \| \partial_j \eta \|_{k, d+2} \right) \left( 3 + \| x \|_1 \right)^{d+3} \, .
\end{eqnarray}

For the case of $k=j$, observe that for every $\ell \geq 1$,
\begin{eqnarray}
\tilde{G}_{n, j}(y;x) & = & \sum_{m=1}^{\ell} \frac{y_j^{m-1}}{m!} \partial_j^{(m)}G_{n}(y_1, ..., y_{j-1}, 0, ..., 0;x) \nonumber \\
& \mbox{ } & \quad + \frac{1}{y_j} \int_0^{y_j} \int_0^{s_1} \cdots \int_0^{s_{\ell}} \partial_j^{(\ell+1)}G_{n}(y_1, ..., y_{j-1}, s_{\ell +1}, 0, ..., 0;x)  \prod_{p=1}^{\ell +1} d s_p \, .
\end{eqnarray}
As a result, it is clear that
\begin{equation}
\left\| \partial_j^{(\ell)} \tilde{G}_{n, j} \right\|_{\infty} \leq C_j( \ell) \left\| \partial_j^{(\ell+1)} G_{n} \right\|_{\infty} \, ,
\end{equation}
and therefore, 
\begin{equation}
\| \mathcal{G}_{j,j} \|_{j, d+2} \leq C_j \| \eta \|_{j, d+3} \left( 3 + \| x \|_1 \right)^{d+3} \, .
\end{equation}
This proves (\ref{eq:I3bd}) and hence Theorem~\ref{thm:xbd}.
\end{proof}
%
%
%
%

\subsection{On the Morse Lemma for $\gamma$} \label{sec:Morse}

The Morse Lemma demonstrates the existence of a convenient change of variables for 
functions in a neighborhood of a non-degenerate critical point. Let $f: \mathbb{R}^d \to \mathbb{R}$
have a critical point at $x_0$, i.e., $\nabla f(x_0) = 0$. The critical point $x_0$ is said to be 
non-degenerate if the matrix of second partial derivatives of $f$ is
invertible at $x_0$. Given a function $f$ with a non-degenerate critical point $x_0$, the Morse Lemma
shows that there exists a neighborhood $U$ of $x_0$ in which
\begin{equation} \label{eq:cov}
f(x) = f(x_0) + \sum_{j=1}^d \epsilon_j y_j(x)^2 \quad \mbox{for all } x \in U \, 
\end{equation}
and some numbers $ \epsilon_j \in \{ \pm 1 \}$. More precisely, the Morse Lemma proves that there exists a neighborhood $U$ of $x_0$ and
a diffeomorphism $\mathcal{D}$ from $U$ into a neighborhood of the origin such that (\ref{eq:cov}) holds
with $y_j(x)$ representing the $j$-component of $\mathcal{D}(x)$. Clearly then, $\mathcal{D}(x_0) =0$. Since the Morse Lemma is constructive, it
is particularly useful in calculations. We refer the interested reader to Chapter VIII, Section 2.3.2 of \cite{Stein} for a constructive proof of the Morse Lemma.

Consider the function $\gamma : \mathbb{R}^d \to \mathbb{R}$ defined by setting
\begin{equation}
\gamma(k) =  \sqrt{\om^2 + 4\sum_{j=1}^d \la_j\sin^2(k_j/2)} \, ,
\end{equation}
for some parameters $\omega >0$ and $\lambda_j >0$ for all $1 \leq j \leq d$.
Recall that 
\begin{equation}
\frac{ \partial \gamma}{ \partial k_j} (k) = \lambda_j \frac{ \sin(k_j)}{ \gamma(k)} \quad \mbox{for all } 1 \leq j \leq d \, ,
\end{equation}
it is clear that all points $k^*$ with components $k^*_j \in \{0, \pi \}$ for all $1 \leq j \leq d$ are critical points of $\gamma$.
Moreover, at each such point $k^*$, a short calculation shows that
 \begin{equation} \label{eq:gam2d}
\frac{\partial^2\gamma(k^*)}{\partial k_j\partial k_i} = \lambda_j \frac{ \cos(k^*_j)}{ \gamma(k^*)} \delta_0(i-j) \, ,
\end{equation}
and thus, each of these critical points is non-degenerate. In this case, the Morse Lemma
applies to $\gamma$ at each such critical point $k^*$. Let $U_{k^*}$ be the corresponding 
neighborhood and define $r(k^*)$ by taking the supremum over all $0<r \leq 1$ such that $B_r(k^*) \subset U_{k^*}$.
For the $2^d$ critical points mentioned above, take $2r = \min_{k^*}[r(k^*)]$.
This number $r>0$ will be used in our proof of Theorem~\ref{thm:xbd}.

%
%
%
\appendix
\section{Integral Inequalities}   \label{app:inineq}

In this section, we provide several estimates for
oscillatory integrals which are more or less well-known. 
We have included them to help demonstrate how our
results depend on certain parameters. For proofs, we 
follow closely the methods of \cite{Stein}.
The first few results, Lemma~\ref{l:outside_critical2} 
through Corollary~\ref{cor:kderbd}, apply to general phases 
for which lower bounds on derivatives are known.
Lemma~\ref{lem:prop5} generalizes Corollary~\ref{cor:kderbd}
to arbitrary multi-indices, but we present the result only in the
specific context needed in Section~\ref{sec:uniest}.    
The results contained in Lemma~\ref{lem:Qest} through Lemma~\ref{lem:gen}
concern the case that the phase is quadratic. They will be used
in Section~\ref{sec:main}.

We begin with some bounds applicable for general phases, i.e.
functions $\phi$ in the lemmas below. They apply when one has
lower bounds on derivatives of the phase. 
The first lemma is an analogue of Proposition 4 from Section 2 of Chapter VIII in \cite{Stein}.
In \cite{Stein}, the result is stated as an asymptotic. It is more useful to us as an inequality.

\begin{lem}                                                                     \label{l:outside_critical2}
Let $\phi : \mathbb{R}^d \to \mathbb{R}$ be smooth and take $\eta : \mathbb{R}^d \to \mathbb{C}$
smooth and of compact support. Suppose there exists a number $\alpha >0$ for which
\begin{equation} \label{eq:lbdphi1}
0< \alpha \leq \abs{ \frac{\partial}{ \partial y_j} \phi(y) } 
\end{equation}
for some $1 \leq j \leq d$ and all $y \in {\rm supp}( \eta)$.
Then, for any integer $N \geq 1$ and any real $t \neq 0$,
there exists a number $C_N(\phi, \eta)$ for which
\begin{equation} \label{eq:arbdec}
\abs{\int_{\R^d} e^{i t \phi(y)}
\eta(y) \,dy} \le \frac {C_{N}(\phi, \eta)} {|t|^N} \, .
\end{equation}
\end{lem}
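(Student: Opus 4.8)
The plan is to use the standard non-stationary phase argument: repeatedly integrate by parts using the differential operator that "lowers the oscillation." Since $|\partial_j \phi| \geq \alpha > 0$ on $\supp(\eta)$, the key identity is that the first-order operator
\begin{equation}
L = \frac{1}{it\, \partial_j \phi(y)} \, \frac{\partial}{\partial y_j}
\end{equation}
satisfies $L\bigl(e^{it\phi(y)}\bigr) = e^{it\phi(y)}$. Its formal transpose (obtained by integration by parts in the $y_j$ variable, with no boundary terms since $\eta$ has compact support) is
\begin{equation}
L^{t}\psi = -\frac{1}{it}\,\frac{\partial}{\partial y_j}\!\left( \frac{\psi}{\partial_j \phi} \right) \, .
\end{equation}
First I would check that $L^t$ maps smooth compactly supported functions to smooth compactly supported functions, and then apply the integration by parts $N$ times to write
\begin{equation}
\int_{\R^d} e^{it\phi(y)}\eta(y)\,dy = \int_{\R^d} e^{it\phi(y)}\,(L^t)^N\eta(y)\,dy \, .
\end{equation}

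Next I would show, by induction on $N$, that $(L^t)^N\eta$ has the shape $\frac{1}{(it)^N}$ times a finite sum of terms of the form $(\partial_j \phi)^{\ell - 2N}\,\partial_j^{(\ell)}\eta \cdot P_{N,\ell}$, where $P_{N,\ell}$ is a universal polynomial expression in the derivatives $\partial_j^{(m)}\phi$ (with $m \leq 2N$) — this is exactly the structure the paper later invokes in its proof of Theorem~\ref{thm:xbd}, where it writes
\begin{equation}
C_N(\gamma, \eta) = \sum_{\ell=0}^{N} \int_{\R^d} \alpha^{\ell - 2N} \bigl| \partial_j^{(\ell)}\eta(k)\bigr|\,\bigl|P_{N,\ell}(k)\bigr|\,dk \, .
\end{equation}
The inductive step is just the product/quotient rule applied to $\partial_j\bigl((\partial_j\phi)^{\ell-2N-1}\cdots\bigr)$, which either differentiates one of the $\partial_j^{(m)}\phi$ factors (raising its order, still $\leq 2N+2 = 2(N+1)$) or lowers the power of $\partial_j\phi$ by one; tracking the bookkeeping gives the claimed form with $N$ replaced by $N+1$.

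Finally, I would bound $|(L^t)^N\eta(y)| \leq |t|^{-N} \sum_{\ell=0}^{N} \alpha^{\ell - 2N}\,|\partial_j^{(\ell)}\eta(y)|\,|P_{N,\ell}(y)|$ pointwise on $\supp(\eta)$, using $|\partial_j\phi| \geq \alpha$ (note $\ell - 2N < 0$, so this is where the positivity of $\alpha$ is essential), integrate over the compact set $\supp(\eta)$, and read off
\begin{equation}
C_N(\phi,\eta) = \sum_{\ell=0}^{N} \int_{\supp(\eta)} \alpha^{\ell-2N}\,\bigl|\partial_j^{(\ell)}\eta(y)\bigr|\,\bigl|P_{N,\ell}(y)\bigr|\,dy \, ,
\end{equation}
which is finite because $\eta$ and $\phi$ are smooth and $\eta$ is compactly supported. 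The only mildly delicate point — the main obstacle, such as it is — is getting the combinatorial structure of $(L^t)^N\eta$ right so that the constant $C_N(\phi,\eta)$ is displayed explicitly in terms of $\alpha$, $\supp(\eta)$, and the derivatives of $\phi$ and $\eta$; this explicit form is what is really needed downstream, since the later sections must track how everything depends on $x$ through $\eta$. Everything else is routine integration by parts.
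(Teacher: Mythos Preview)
Your proposal is correct and is essentially the paper's own argument: the paper's operator $D_\phi f = (f/\phi')'$ is precisely your $L^t$ (up to the factor $-1/it$), and the inductive formula $D_\phi^N f = \sum_{k=0}^N f^{(k)}(\phi')^{k-2N}P_{N,k}$ is exactly the structure you identify. The only cosmetic differences are that the paper first slices to a one-dimensional integral in the $j$-th variable before integrating by parts, and it records the sharper bound that $P_{N,k}$ involves derivatives of $\phi$ only up to order $N+1$ rather than your looser $2N$.
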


As will be clear from the proof, the number $C_N(\phi, \eta)$ depends
on both $\alpha$ and $d$, but we do not include this in the notation. Moreover, the dependence of
$C_N( \phi, \eta)$ on $\phi$ and $\eta$ can be expressed in terms of quantities involving
only their partial derivatives with respect to $y_j$ up to order $N+1$.
  
\begin{proof}
Without loss of generality, we will assume that $j=1$. Fix $y \in {\rm supp}( \eta)$ and write
$y=(x, u)$ with $x \in \mathbb{R}$ and $u \in \mathbb{R}^{d-1}$. Regarding $u$
fixed, we will estimate the integral
\begin{equation}
I_1 (u) = \int_{\mathbb{R}} e^{it \phi(x, u)} \eta(x, u) \, dx \, 
\end{equation}
uniformly in $u$. With a slight abuse of notation, we will simply write the functions in the integrand above as $\phi(x)$ and $\eta(x)$.

Now, for any smooth function $f$ with $| \phi'(x)|>0$ for all $x \in {\rm supp}(f)$, we define
\begin{equation}
(D_{\phi} f)(x) = \frac{d}{dx} \left( \frac{f(x)}{\phi'(x)} \right) \, .
\end{equation} 
Using (\ref{eq:lbdphi1}), it is clear that $\eta$ is such a function.

Integration by parts yields
\begin{equation}
\abs{I_1} = \frac{1}{ |t|^N} \abs{\int_{\R} e^{i t \phi(x)}(D_{\phi}^N \eta)(x) \,dx} \, .
\end{equation}
An induction argument shows that for any $N \geq 1$,
\begin{equation}
(D_{\phi}^Nf)(k) = \sum_{k=0}^N \frac{f^{(k)}(x)}{( \phi'(x))^{2N-k}} P_{N,k}(x) \, ,
\end{equation}
where $P_{N,k}$ is a polynomial in the derivatives of $\phi$ up to order $N+1$. 
We have proven that
\begin{eqnarray}
\abs{\int_{\R^d} e^{i t \phi(y)} \eta(y) \,dy} & = & \abs{\int_{\R^{d-1}} I_1(u) \, d u} \nonumber \\
& \leq & \frac{1}{|t|^N} \sum_{k=0}^N \int_{\mathbb{R}^d} \alpha^{k-2N} \abs{ \left( \frac{ \partial}{ \partial y_1} \right)^k \eta(y)} \abs{P_{N,k}(y)} \, d \, y \nonumber \\ \label{eq:D4const}
& \leq &  \frac{C_N( \phi, \eta)}{| t|^N} \, ,
\end{eqnarray}
as claimed in (\ref{eq:arbdec}).
\end{proof}

The following one-dimensional result is a Corollary of Proposition 2 in Section 1 of Chapter VIII of \cite{Stein}.
We state it without proof and refer the interested reader to \cite{Stein}.

\begin{lem} \label{lem:1dest}
Let $\phi : \mathbb{R} \to \mathbb{R}$ be smooth and take $\eta : \mathbb{R} \to \mathbb{C}$ differentiable.
Suppose that for some integer $k \geq 2$ there is an $\alpha >0$ for which
\begin{equation}
0 < \alpha \leq \abs{ \phi^{(k)}(x)} \quad \mbox{for all } a \leq x \leq b \, . 
\end{equation}
Then,
\begin{equation}
\abs{ \int_a^b e^{i \phi(x)} \eta(x) dx} \leq  \frac{c_k}{ \sqrt[k]{\alpha}} \left( | \eta(b)| + \int_a^b \abs{ \eta'(x)} dx \right) \, .
\end{equation}
\end{lem}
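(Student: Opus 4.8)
The plan is to deduce the lemma from the classical van der Corput estimate --- the amplitude-free case, which is Proposition~2 of Chapter~VIII in \cite{Stein} --- together with an integration-by-parts argument that accommodates the general amplitude $\eta$.

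The core step is to establish the following sublevel bound: if $\phi$ is smooth on $[a,b]$ with $\abs{\phi^{(k)}(x)} \geq \alpha$ for all $x \in [a,b]$, then there is a constant $c_k$ depending only on $k$ with
\[
\sup_{a \leq \beta \leq b} \abs{ \int_a^\beta e^{i\phi(x)}\,dx } \leq \frac{c_k}{\alpha^{1/k}} \, .
\]
I would prove this by induction on $k$. The base case is $k=1$, where one additionally has that $\phi'$ is monotone (this will always be available in the inductive step); then a single integration by parts, using $(1/\phi')' = -\phi''/(\phi')^2$ together with the fact that $\phi''$ does not change sign, produces the bound with $c_1 = 4$, uniformly in the upper limit $\beta$. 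For the inductive step $k \geq 2$, replacing $\phi$ by $-\phi$ if necessary one may assume $\phi^{(k)} \geq \alpha > 0$, so $\phi^{(k-1)}$ is increasing and vanishes at most once on $[a,b]$, say at a point $c$. Given $\delta > 0$, split $[a,\beta]$ into the interval $(c-\delta, c+\delta) \cap [a,\beta]$, where the trivial bound $2\delta$ applies, and the at most two remaining subintervals, on each of which $\abs{\phi^{(k-1)}} \geq \alpha \delta$ by monotonicity; applying the inductive hypothesis (with $k-1$ and $\alpha\delta$ in place of $k$ and $\alpha$) on these bounds the whole integral by $2\delta + 2 c_{k-1}(\alpha\delta)^{-1/(k-1)}$. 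Choosing $\delta = \alpha^{-1/k}$ (and falling back on the trivial bound $\beta - a$ when $\alpha^{-1/k} > b-a$) gives the claim with $c_k = 2 + 2c_{k-1}$.

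To pass to the stated lemma, I would write $\eta(x) = \eta(b) - \int_x^b \eta'(t)\,dt$, substitute into $\int_a^b e^{i\phi(x)}\eta(x)\,dx$, and exchange the order of integration to obtain
\[
\int_a^b e^{i\phi(x)}\eta(x)\,dx = \eta(b)\int_a^b e^{i\phi(x)}\,dx - \int_a^b \eta'(t) \left( \int_a^t e^{i\phi(x)}\,dx \right) dt \, .
\]
Taking absolute values and estimating every inner integral by the sublevel bound above yields
\[
\abs{ \int_a^b e^{i\phi(x)}\eta(x)\,dx } \leq \frac{c_k}{\alpha^{1/k}} \left( \abs{\eta(b)} + \int_a^b \abs{\eta'(t)}\,dt \right) \, ,
\]
which is the assertion of the lemma.

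I expect the inductive step of the sublevel bound to be the only real obstacle: one must argue that the set where $\phi^{(k-1)}$ is small is a single short interval about its unique zero, check that the complementary ``good'' pieces number at most two and that the inductive hypothesis applies to them uniformly in the upper endpoint $\beta$, and then balance the two contributions by the choice of $\delta$. The base case and the amplitude reduction are routine integrations by parts.
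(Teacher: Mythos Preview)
Your proof is correct and follows exactly the route the paper defers to: the paper does not prove this lemma at all but simply cites Proposition~2 of Chapter~VIII in \cite{Stein} and the corollary that follows it, and what you have written is precisely that argument. One cosmetic remark: your base-case bound $c_1=4$ overcounts the variation term; since $1/\phi'$ is monotone with values in $[-1/\alpha,1/\alpha]$ of fixed sign, its total variation is at most $1/\alpha$ rather than $2/\alpha$, giving $c_1=3$ and hence $c_k = 5\cdot 2^{k-1}-2$, which is the constant the paper quotes.
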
 
As is shown in \cite{Stein}, the number $c_k$ may be taken as $c_k = 5 \cdot 2^{k-1} -2$.
Moreover, in many applications of Lemma~\ref{lem:1dest}, ${\rm supp}(\eta) \subset (a,b)$ and so $\eta(b) =0$.
An immediate corollary follows.

\begin{cor} \label{cor:kderbd}
Let $\phi: \mathbb{R}^d \to \mathbb{R}$ have continuous partial derivatives up to order $k \geq 2$ and
let $\eta : \mathbb{R}^d \to \mathbb{C}$ have compact support contained in some open ball $B$.
If $\eta$ has continuous first order partial derivatives and there is some $1 \leq j \leq d$ and a number 
$\alpha>0$ for which   
\begin{equation}  \label{eq:d2phibd}
0 < \alpha \leq \abs{\frac {\partial^k \phi}{\partial y_j^k}(y)} \quad \mbox{for all} \quad y  \in B\, , 
\end{equation}
then 
\begin{equation} \label{eq:kdest}
\abs{\int_{\R^d} e^{i \phi(y)} \eta(y) \,dy} \le \frac{c_k}{ \sqrt[k]{ \alpha}} \int_{\mathbb{R}^d} \abs{ \frac{\partial \eta}{ \partial y_j} (y)} \, dy  ,
\end{equation}
with $c_k$ as in Lemma~\ref{lem:1dest}.
\end{cor}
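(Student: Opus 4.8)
\textbf{Proof proposal for Corollary~\ref{cor:kderbd}.}
The plan is to reduce the $d$-dimensional statement to the one-dimensional estimate of Lemma~\ref{lem:1dest} by Fubini's theorem, exactly as in the proof of Lemma~\ref{l:outside_critical2}. Without loss of generality assume $j=1$, and write a point of $\R^d$ as $y = (x,u)$ with $x \in \R$ and $u \in \R^{d-1}$. For each fixed $u$, the function $x \mapsto \eta(x,u)$ is differentiable with compact support contained in the slice $B_u = \{ x : (x,u) \in B \}$, which is a bounded open interval (possibly empty), say $B_u = (a_u, b_u)$. The hypothesis \eqref{eq:d2phibd} gives $0 < \alpha \le |\partial_1^k \phi(x,u)|$ for all $x \in B_u$, so Lemma~\ref{lem:1dest} applies to the inner integral with the phase $x \mapsto \phi(x,u)$ and amplitude $x \mapsto \eta(x,u)$.

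First I would record that, since $\supp(\eta(\cdot,u)) \subset (a_u,b_u)$, the boundary term $|\eta(b_u,u)|$ vanishes, so Lemma~\ref{lem:1dest} yields
\begin{equation}
\abs{ \int_{\R} e^{i\phi(x,u)} \eta(x,u)\, dx } \le \frac{c_k}{\sqrt[k]{\alpha}} \int_{a_u}^{b_u} \abs{ \frac{\partial \eta}{\partial y_1}(x,u) } \, dx = \frac{c_k}{\sqrt[k]{\alpha}} \int_{\R} \abs{ \frac{\partial \eta}{\partial y_1}(x,u) } \, dx \, .
\end{equation}
Then I would integrate this bound over $u \in \R^{d-1}$, using Fubini's theorem to write $\int_{\R^d} e^{i\phi(y)}\eta(y)\,dy = \int_{\R^{d-1}} \big( \int_{\R} e^{i\phi(x,u)}\eta(x,u)\,dx \big)\, du$, take absolute values inside, and apply the slicewise estimate, obtaining
\begin{equation}
\abs{ \int_{\R^d} e^{i\phi(y)} \eta(y)\, dy } \le \frac{c_k}{\sqrt[k]{\alpha}} \int_{\R^{d-1}} \int_{\R} \abs{ \frac{\partial \eta}{\partial y_1}(x,u) } \, dx \, du = \frac{c_k}{\sqrt[k]{\alpha}} \int_{\R^d} \abs{ \frac{\partial \eta}{\partial y_j}(y) } \, dy \, ,
\end{equation}
which is \eqref{eq:kdest}.

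This argument is essentially routine; the only points requiring a word of care are the measurability/integrability needed to invoke Fubini (covered by smoothness of $\phi$ and compact support of $\eta$, so everything is bounded and supported in a fixed bounded set), and the observation that the slice $B_u$ of an open ball $B$ is an open interval so that $\eta(\cdot,u)$ indeed has support compactly contained in $(a_u,b_u)$ and the boundary term drops. There is no real obstacle here — the substance is entirely in Lemma~\ref{lem:1dest}, and the corollary is just its integrated form with the boundary term removed by the support hypothesis.
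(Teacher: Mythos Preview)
Your proof is correct and follows essentially the same route as the paper's: reduce to one dimension via Fubini by writing $y=(x,u)$, apply Lemma~\ref{lem:1dest} to the inner integral in $x$ with the boundary term vanishing because $\supp(\eta(\cdot,u))$ is compactly contained in the slice, and then integrate the resulting bound over $u$. The only cosmetic difference is that the paper phrases the choice of interval as ``there exists $(a(u),b(u))$ containing $\supp(\eta(\cdot,u))$'' rather than taking the full slice $B_u$, which sidesteps the tiny issue that the hypothesis \eqref{eq:d2phibd} is stated on the open ball; your version is easily repaired by shrinking $(a_u,b_u)$ slightly so that the closed interval still lies in $B$.
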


\begin{proof}
Without loss of generality, we assume that $j=1$. Write each $y \in \mathbb{R}^d$ as $y=(x,u) \in \mathbb{R} \times \mathbb{R}^{d-1}$.
 Clearly,  
\begin{equation}
\int_{\R^d} e^{i \phi(y)} \eta(y) \,dy = \int_{\R^{d-1}} I_1(u) \,du
\end{equation}
where
\begin{equation}
I_1(u) = \int_{\mathbb{R}} e^{i \phi(x,u)} \eta(x, u) \, d x \, .
\end{equation}
Now, for any $u \in \mathbb{R}^{d-1}$ with $I_1(u) \neq 0$, there exists a finite interval
$(a(u), b(u))$ for which 
\begin{equation}
I_1(u) = \int_{a(u)}^{b(u)} e^{i \phi(x,u)} \eta(x, u) \, d x \, ,
\end{equation}
and ${\rm supp}(\eta( \cdot, u)) \subset (a(u), b(u))$. An application of Lemma~\ref{lem:1dest} shows that
\begin{equation}
\abs{I_1(u)} \leq \frac{c_k}{ \sqrt[k]{\alpha}} \int_{a(u)}^{b(u)} \abs{ \frac{\partial}{ \partial x} \eta(x,u)} \, dx \, ,
\end{equation}
and (\ref{eq:kdest}) readily follows.
\end{proof}

Given the above, we can now state and prove a modified version of Proposition 5 from Chapter VIII, Section 2 of \cite{Stein}
as needed in Section~\ref{sec:uniest}. Recall that we have introduced the function 
\begin{equation}
\phi_{t,x}(k) = k\cdot x  -2 t \gamma(k),
\end{equation} 
for each $t \in \mathbb{R}$ and $x \in \mathbb{Z}^d$ with
\begin{equation}
 \gamma(k) = \sqrt{\om^2 + 4\sum_{j=1}^d \la_j\sin^2(k_j/2)} \, 
\end{equation}
and parameters $\omega >0$ and $\lambda_j >0$ for all $1 \leq j \leq d$. 

\begin{lem} \label{lem:prop5}
Let $d \geq 2$, $m \geq 2$, and $\eta : \mathbb{R}^d \to \mathbb{C}$ have continuous first order partial derivatives
and compact support in some open ball $B$. Fix $t \in \mathbb{R} \setminus \{ 0 \}$ and $x \in \mathbb{Z}^d$. If there is a multi-index $\beta$ 
with $| \beta| = m$ and a number $\alpha >0$ for which 
\begin{equation} \label{eq:betaderbd}
0 < \alpha |t| \leq \abs{ \partial^{\beta} \phi_{t,x} (k) } \quad  \mbox{for all } k \in B \, ,
\end{equation}
then there exists a number $C$, independent of $t$ and $x$, for which
\begin{equation} \label{eq:prop5est}
\abs{\int_{\R^d} e^{i \phi_{t,x}(k)} \eta(k) \,dk} \le  \frac{C}{ \sqrt[m]{|t|}} \left( \| \nabla \eta \|_1 + \| \eta \|_1 \right) \, .
\end{equation}
\end{lem}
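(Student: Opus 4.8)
The plan is to follow the strategy behind Proposition~5 of Chapter~VIII, Section~2 of \cite{Stein}, but to organize the argument so that Corollary~\ref{cor:kderbd} applies directly and every constant is manifestly independent of $t$ and $x$. The first observation I would make is that, because $|\beta| = m \geq 2$, the linear term $k\cdot x$ in $\phi_{t,x}$ is annihilated by $\partial^\beta$ and, more generally, by every $m$-th order directional derivative; hence for each unit vector $v \in \mathbb{R}^d$ one has $\partial_v^m \phi_{t,x} = -2t\,\partial_v^m\gamma$, which does not depend on $x$ and grows linearly in $t$. In particular hypothesis~\eqref{eq:betaderbd} is equivalent to the $t$- and $x$-free bound $|\partial^\beta\gamma(k)| \geq \alpha/2$ for all $k \in B$.

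The key step is a polarization argument that turns this bound on a mixed derivative of $\gamma$ into lower bounds on pure directional derivatives. Since the $m$-th powers of linear forms span the homogeneous polynomials of degree $m$ in $d$ variables, there exist finitely many unit vectors $v_1,\dots,v_L$ and constants $c_1,\dots,c_L$, depending only on $\beta$ (hence only on $m$ and $d$), with $\partial^\beta = \sum_{l=1}^{L} c_l\,\partial_{v_l}^{m}$; for instance, if $\beta = e_i + e_j$ with $i \neq j$ one may take $v_1 = (e_i+e_j)/\sqrt{2}$, $v_2 = (e_i-e_j)/\sqrt{2}$, and $c_1 = -c_2 = 1/2$, while $\beta = 2e_j$ needs no polarization at all. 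Consequently, at every $k \in B$ at least one of the numbers $|\partial_{v_l}^{m}\gamma(k)|$ is at least some fixed $c' = c'(\alpha,m,d) > 0$, so by continuity each $k \in B$ lies in an open ball $B_\rho(k) \subset B$ on which one single $|\partial_{v_l}^{m}\gamma|$ stays above $c'/2$; by compactness I would extract a finite subcover $B_1,\dots,B_P$ of $\overline{B}$, with $P$ depending on $\gamma$, $\alpha$, and $B$, but not on $t$ or $x$.

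Next I would choose a smooth partition of unity $\{\rho_p\}_{p=1}^{P}$ subordinate to $\{B_p\}$, write $\eta = \sum_p \eta_p$ with $\eta_p = \rho_p\eta$ supported in the ball $B_p$, and estimate each $\int_{\mathbb{R}^d} e^{i\phi_{t,x}}\eta_p$ separately. On $B_p$ some pure directional derivative obeys $|\partial_v^{m}\phi_{t,x}| = 2|t|\,|\partial_v^{m}\gamma| \geq c'|t|$; after an orthogonal change of variables carrying $v$ onto a coordinate axis — which preserves Lebesgue measure, turns $B_p$ into a ball, and turns $\partial_v$ into a coordinate derivative dominated by $|\nabla\cdot|$ — Corollary~\ref{cor:kderbd} applies with exponent $m$ and lower bound $c'|t|$, giving $|\int_{\mathbb{R}^d} e^{i\phi_{t,x}}\eta_p| \leq c_m (c'|t|)^{-1/m}\|\nabla\eta_p\|_1$. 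Summing over $p$ and using $\sum_p \rho_p \equiv 1$ on $B$ together with the boundedness of $\sum_p |\nabla\rho_p|$ yields $\sum_p \|\nabla\eta_p\|_1 \leq \|\nabla\eta\|_1 + C\|\eta\|_1$, which gives \eqref{eq:prop5est}.

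The part I expect to take the most care is not the oscillatory-integral estimate itself, which is entirely delegated to Corollary~\ref{cor:kderbd}, but the bookkeeping that shows the final constant is independent of $t$ and $x$: one must check that the vectors $v_l$ and constants $c_l$ depend only on $\beta$, that $c'$ depends only on $\alpha,m,d$, that the covering and the partition of unity depend only on $\gamma,\alpha,B$, and — the crucial point — that the only derivative bound fed into Corollary~\ref{cor:kderbd} is of an $m$-th ($m \geq 2$) directional derivative of $\phi_{t,x}$, which by the first observation equals $-2t\,\partial_v^{m}\gamma$ and hence is entirely free of $x$. Tracking the constant $c_m$ from Lemma~\ref{lem:1dest} through these finitely many pieces then produces the claimed bound; I note that for the application in Section~\ref{sec:uniest} only $m = 2$ is needed, where the two-vector identity above together with the direct case $\beta = 2e_j$ already suffices.
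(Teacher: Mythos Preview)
Your proposal is correct and follows essentially the same route as the paper: polarization to pass from $\partial^\beta$ to pure directional derivatives, the observation that $\partial_v^m\phi_{t,x}=-2t\,\partial_v^m\gamma$ is $x$-free, a finite cover by balls on which one directional derivative is bounded below, a subordinate partition of unity, a rotation, and then Corollary~\ref{cor:kderbd}. The one place where the paper is slightly more explicit is the uniformity of the covering: rather than appealing to continuity of $\partial_{v_l}^m\gamma$ and compactness, the paper bounds $|(\xi\cdot\nabla)^m\phi_{t,x}(k)-(\xi\cdot\nabla)^m\phi_{t,x}(k_0)|\le 2md^2\Gamma_{m+1}|t|\,|k-k_0|$ with $\Gamma_{m+1}=\sup_{\overline B}\max_{|\beta'|=m+1}|\partial^{\beta'}\gamma|$, thereby producing a single explicit radius $r$ independent of $k_0$, $t$, and $x$; your compactness argument achieves the same conclusion since the functions $\partial_{v_l}^m\gamma$ are fixed, so this is a cosmetic difference. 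One small slip: you write that you cover $\overline B$, but balls $B_\rho(k)\subset B$ need not cover boundary points of $B$; you only need to (and the paper does) cover the compact set $\operatorname{supp}\eta\subset B$.
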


Here, the crucial observation is that the prefactor $C$ is independent of $x$. 
As can be seen from the proof, $C$ depends, in particular, on the diameter of $B$ and the distance from
the support of $\eta$ to the boundary of $B$. To be clear, we have denoted by
\begin{equation}                        \label{eq:11norm}
\| \nabla \eta \|_1 = \int_{\mathbb{R}^d} \sum_{j=1}^d \abs{ \frac{ \partial \eta}{ \partial k_j} (k)} \, dk \, .
\end{equation}

\begin{proof}
As is shown in Chapter VIII, Section 2.2 of \cite{Stein}, one can represent mixed partial 
derivatives as a linear combination of directional derivatives. In fact, with $\beta$ as above and
any smooth function $f$,
\begin{equation}
\partial^\beta f(k) = \sum_{ \ell =1}^D c_{\ell} (\xi_{\ell} \cdot \nabla)^m f (k)
\end{equation}
holds for some coefficients $c_{\ell}$, unit vectors $\xi_{\ell}$, and where $D = D(m)$ is the 
dimension of the space of all homogeneous polynomials of order $m$ in $\R^d$.
Now the bound \eqref{eq:betaderbd} implies that for each fixed $k_0 \in B$ there exists a unit vector $\xi$ such that
\begin{equation} \label{eq:lwbdkder}
|(\xi \cdot \nabla)^m \phi_{t,x} (k_0)| \ge \frac{\alpha |t| }{c D},
\end{equation}
where $c = \max_m |c_m|.$ In general, the unit vector $\xi$ will depend on $k_0$, $x$, and $t$.
We claim that, for each $k_0 \in {\rm supp}(\eta)$, there is a number $r>0$ for which
\begin{equation} \label{eq:lwbdkder2}
|(\xi \cdot \nabla)^m \phi_{t,x} (k)| \ge \frac{\alpha |t| }{2 c D} \quad \mbox{for all } k \in B_r(k_0) \, ,
\end{equation}
and moreover, we will show that the radius $r$ is independent of $k_0$, $x$, and $t$.

Let us denote by
\begin{equation}
F(k) = (\xi \cdot \nabla)^m \phi_{t,x}(k) \, .
\end{equation}
It is clear that
\begin{equation}
F(k) - F(k') = \sum_{j=1}^d \int_{k'_j}^{k_j} \partial_j F \left( (k_1', \cdots, k_{j-1}', s, k_{j+1}, \cdots, k_d) \right) \,  ds \, ,
\end{equation}
where we have denoted by $\partial_j$ the partial derivative with respect to $k_j$.
With this we find that for any $k \in B$,
\begin{equation}
\left| (\xi \cdot \nabla)^m \phi_{t,x}(k) - (\xi \cdot \nabla)^m \phi_{t,x}(k_0) \right| \leq 2 m d^2 \Gamma_{m+1} |t| |k-k_0| \, ,
\end{equation}
having set 
\begin{equation}
\Gamma_{m+1} = \max_{\beta': |\beta'| = m+1} \sup_{k \in \overline{B}} \abs{ \partial^{\beta'} \gamma(k)} < \infty \, .
\end{equation}
Taking 
\begin{equation}
r = \min \left[ {\rm dist}( {\rm supp}(\eta), B^c), \frac{\alpha}{4 cDmd^2 \Gamma_{m+1}} \right] \, ,
\end{equation}
we have proven (\ref{eq:lwbdkder2}).

With $r$ fixed as above, the compact support of $\eta$ can be covered with finitely many 
such open balls. We label these  $\{ B_{n} \}_{n=1}^{\mathcal{N}}$ and conclude $\mathcal{N}$ can be bounded
in terms of $r$, the diameter of $B$, and $d$. Choose a partition of unity $\{f_{n}\}$ subordinate to $\{ B_{n} \}$, i.e., 
choose $f_{ n} \in C^{\infty}(\mathbb{R}^d)$ with $0 \leq f_{ n}(y) \leq 1$, ${\rm supp}(f_{ n}) \subset B_{n}$, and 
\begin{equation}
\sum_{n=1}^N f_{ n}(y) = 1 \quad \mbox{for all} \quad y \in {\rm supp}( \eta) \, .
\end{equation}
It is clear then that
\begin{equation} 
\abs{\int_{\R^d} e^{i \phi_{t,x}(k)} \eta(k) \,dk} \le \sum_{n=1}^N \abs{\int_{\R^d} e^{i \phi_{t,x}(k)} \eta_{n}(k) \,dk} \, ,
\end{equation}
where we have set $\eta_{n} = f_{n} \eta$.

For each fixed $n$, we change variables in the corresponding integral above in such a way that
the first coordinate, say $y_1$, is aligned with the unit vector $\xi_n$ for which (\ref{eq:lwbdkder2}) holds on $B_n$. As a result of this rotation, the integral
\begin{equation}
\int_{\R^d} e^{i \phi_{t,x}(k)} \eta_n(k) \,dk = \int_{\R^d} e^{i \phi_{t,x}(k(y))} \eta_n(k(y)) \,dy
\end{equation}
satisfies the assumptions of Corollary~\ref{cor:kderbd} with $j=1$ and $\eta = f_n \eta$.  
We have proven that
\begin{equation}
\abs{\int_{\R^d} e^{i \phi_{t,x}(k)} \eta(k) \,dk} \le \frac{c_k \sqrt[k]{2cD}}{\sqrt[k]{\alpha |t|}} \left( N \left\| \frac{\partial}{ \partial y_1} \eta \right\|_1 + \sum_{n=1}^N \left\| \frac{\partial}{ \partial y_1} f_n \right\|_{\infty} \| \eta \|_1 \right) \, .
\end{equation}
The bound in (\ref{eq:prop5est}) follows by estimating the $y_1$-partial derivative with a gradient. 
\end{proof}

For the remainder of this appendix, we restrict our attention to the
special case that the phase is quadratic. More specifically, let
$Q: \mathbb{R}^d \to \mathbb{R}$ be given by
\begin{equation} \label{eq:quad}
Q(y) = \sum_{j=1}^d \epsilon_j y_j^2 \,
\end{equation}
where, for $1 \leq j \leq d$, the numbers $\epsilon_j \in \{ \pm 1 \}$. 
The following estimate results from a simple calculation. We state it for easy
reference.
\begin{lem} \label{lem:Qest}
Take $Q$ as in (\ref{eq:quad}) above. The bound
\begin{equation} \label{eq:Qest}
\abs{\int_{\R^d} e^{i t Q(y)} e^{-\abs{y}^2} dy}\le \left\{
\begin{array}{cc} (\pi)^{d/2} & \mbox{if } t= 0, \\ (\pi /
|t|)^{d/2} & \mbox{otherwise}, \end{array} \right.
\end{equation}
holds for each $t \in \mathbb{R}$.
\end{lem}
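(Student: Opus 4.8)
The plan is to exploit the product structure of both the phase and the Gaussian weight. Since $Q(y) = \sum_{j=1}^d \epsilon_j y_j^2$ and $e^{-|y|^2} = \prod_{j=1}^d e^{-y_j^2}$, the integral factors completely:
\[
\int_{\R^d} e^{i t Q(y)} e^{-|y|^2}\, dy = \prod_{j=1}^d \int_{\R} e^{-(1 - i \epsilon_j t) y_j^2}\, dy_j \, .
\]
For each $j$, the quadratic coefficient $a_j = 1 - i \epsilon_j t$ has strictly positive real part $\mathrm{Re}(a_j) = 1 > 0$, so the standard complex Gaussian identity $\int_{\R} e^{-a y^2}\, dy = \sqrt{\pi / a}$ (valid for $\mathrm{Re}(a) > 0$, with the principal branch of the square root, which is unambiguous on the open right half-plane) applies and gives
\[
\int_{\R} e^{-(1 - i \epsilon_j t) y^2}\, dy = \sqrt{\frac{\pi}{1 - i \epsilon_j t}} \, .
\]
This identity can be obtained by completing the square and shifting the contour of integration, or simply cited; it is the only point in the argument that requires a word of care, and it is entirely routine.

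Next I would pass to absolute values. For either sign $\epsilon_j = \pm 1$ one has $|1 - i\epsilon_j t| = (1 + t^2)^{1/2}$, so each factor has modulus $\sqrt{\pi}\,(1 + t^2)^{-1/4}$, and taking the product over $1 \leq j \leq d$ yields the exact identity
\[
\abs{\int_{\R^d} e^{i t Q(y)} e^{-|y|^2}\, dy} = \frac{\pi^{d/2}}{(1 + t^2)^{d/4}} \, .
\]
Finally, the claimed bound follows from the two elementary inequalities $(1 + t^2)^{d/4} \geq 1$ and $(1 + t^2)^{d/4} \geq |t|^{d/2}$. The first gives the estimate $\pi^{d/2}$ for every $t \in \R$ (in particular for $t = 0$, where in fact equality holds), and the second gives $(\pi/|t|)^{d/2}$ whenever $t \neq 0$. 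There is no genuine obstacle in this lemma; it is recorded separately only because it is invoked repeatedly, via $I_1$, in the proof of Theorem~\ref{thm:xbd}.
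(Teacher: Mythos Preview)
Your proof is correct and is precisely the ``simple calculation'' the paper alludes to without writing out; factoring into one-dimensional complex Gaussians and bounding $(1+t^2)^{d/4}$ from below by $1$ and by $|t|^{d/2}$ is exactly what is intended. One very minor comment: the lemma is in fact invoked only once (for $I_1$), not repeatedly, but this does not affect the argument.
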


If the gaussian in (\ref{eq:Qest}) is multiplied by a smooth function that vanishes near the origin, then
the decay rate, in large $|t|$, improves quite a bit. The following notation will be convenient
for the remainder of this section. Let $f: \mathbb{R}^d \to \mathbb{C}$ be smooth. For any
$1 \leq j \leq d$ and any $N \geq 1$ set
\begin{equation}
\| f \|_{j, N} = \max_{0 \leq k \leq N} \| \partial_j^{(k)} f \|_{\infty} \, ,
\end{equation}
where we have denoted by $\partial_j^{(k)}$ the $k$-th partial derivative of $f$ 
with respect to the $j$-component.

\begin{lem} \label{lem:Qetaest}
Let $\eta \in C^{\infty}( \mathbb{R}^d)$ be a bounded function for
which, given any multi-index $\beta$, the function
$\partial^{\beta} \eta$ is bounded.  Suppose $\eta$ is supported away
from the origin, i.e.,
\begin{equation}						\label{eq:supp_s}
0<s = \inf_{y \in \supp (\eta)} |y| \, ,
\end{equation}
take $Q$ is as in (\ref{eq:quad}), and let $N \geq 1$ be an integer. There exists a number $C_N$, independent of $t$, such that
\begin{equation} \label{eq:Qetaest}
\abs{\int_{\R^d} e^{i t Q(y)} e^{-\abs{y}^2} \eta(y) dy}\le \frac
{C_N}{|t|^{N}} \sum_{j=1}^d \| \eta \|_{j,N} 
\end{equation}
holds for all $t \neq 0$.
\end{lem}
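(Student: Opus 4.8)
The plan is to prove Lemma~\ref{lem:Qetaest} by a non-stationary phase (repeated integration by parts) argument exploiting that $\nabla Q$ does not vanish on $\supp(\eta)$. Since $\partial_j Q(y) = 2\epsilon_j y_j$ and $\inf_{y\in\supp(\eta)}\abs{y}=s>0$, every $y\in\supp(\eta)$ satisfies $\abs{y_j}\ge s/\sqrt{d}$ for at least one $j$. First I would fix, once and for all, a partition of unity $\{\chi_j\}_{j=1}^d$ with each $\chi_j\in C^\infty(\R^d)$ having all partial derivatives bounded, with $\sum_{j=1}^d\chi_j\equiv 1$ on a neighborhood of $\supp(\eta)$, and with $\supp(\chi_j)\cap\supp(\eta)\subseteq\{\,\abs{y_j}\ge c_0\,\}$ for a constant $c_0=c_0(s,d)>0$; such a partition exists because the open sets $\{\abs{y_j}>s/(2\sqrt{d})\}$, $j=1,\dots,d$, cover the closed set $\{\abs{y}\ge s\}\supseteq\supp(\eta)$, and the covering can be taken dilation-adapted so that the constructed $\chi_j$ have bounded derivatives. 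Writing $\eta=\sum_j\eta\chi_j$ then reduces the estimate to bounding each $I_j(t):=\int_{\R^d}e^{itQ(y)}e^{-\abs{y}^2}\eta(y)\chi_j(y)\,dy$ by $C_N|t|^{-N}\norm{\eta}_{j,N}$.

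For the $j$-th term I would use the first-order operator $L_j=\frac{\epsilon_j}{2it\,y_j}\partial_j$, which is well defined on $\supp(\chi_j)$ (where $\abs{y_j}\ge c_0$) and satisfies $L_j e^{itQ}=e^{itQ}$, since $\partial_j e^{itQ}=2it\epsilon_j y_j e^{itQ}$ and $\epsilon_j^{-1}=\epsilon_j$. Integrating by parts $N$ times moves $L_j$ onto the amplitude; the boundary terms vanish because the factor $e^{-\abs{y}^2}$ dominates the polynomial growth produced by differentiation while $1/y_j$, $\chi_j$, $\eta$ and their derivatives are all bounded on the relevant support. This yields
\[
I_j(t)=\frac{(-\epsilon_j)^N}{(2it)^N}\int_{\R^d}e^{itQ(y)}\Bigl(\partial_j\circ\tfrac{1}{y_j}\Bigr)^{\!N}\!\bigl(e^{-\abs{y}^2}\eta\chi_j\bigr)(y)\,dy .
\]
The key observation is that $L_j$ differentiates only in the single variable $y_j$, so expanding $(\partial_j\circ\frac{1}{y_j})^N$ via the Leibniz rule produces a sum of at most $C_N$ terms, each a product of: a sum of finitely many powers $y_j^{-a}$ (each bounded on $\{\abs{y_j}\ge c_0\}$ by a constant depending only on $c_0$ and $N$); a pure derivative $\partial_j^{(p)}e^{-\abs{y}^2}$, which is a polynomial in $y_j$ times $e^{-\abs{y}^2}$; a pure derivative $\partial_j^{(r)}\chi_j$, which is bounded; and a pure derivative $\partial_j^{(q)}\eta$ with $q\le N$, bounded by $\norm{\eta}_{j,N}$. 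No mixed partial of $\eta$ ever appears — this is precisely why the estimate can be phrased through the norms $\norm{\eta}_{j,N}$ rather than through all derivatives of $\eta$.

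Taking absolute values and integrating, each such term is bounded by $\norm{\eta}_{j,N}$ times $\int_{\R^d}\abs{P(y_j)}e^{-\abs{y}^2}\,dy$ for some polynomial $P$ of degree at most $2N$, and the latter integral is a finite constant depending only on $N$ and $d$. Summing the $\le C_N$ terms, then over $j=1,\dots,d$, and absorbing $d$, $c_0$, and the bounds on $\chi_j$ and $P$ into a single constant, gives $\abs{\int_{\R^d}e^{itQ}e^{-\abs{y}^2}\eta}\le C_N|t|^{-N}\sum_{j=1}^d\norm{\eta}_{j,N}$, which is \eqref{eq:Qetaest}. The only mildly delicate points are the construction of the partition of unity with uniformly bounded derivatives and the verification that the boundary terms in the iterated integration by parts vanish; I expect the former to be the main (though still routine) piece of bookkeeping, and neither to present a genuine obstacle.
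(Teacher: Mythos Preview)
Your proposal is correct and follows essentially the same route as the paper: localize via a partition of unity to regions where one coordinate $|y_j|$ is bounded below, then integrate by parts $N$ times using the operator $f\mapsto \partial_j(f/y_j)$, producing only $y_j$-derivatives of $\eta$ and hence the norms $\|\eta\|_{j,N}$. The only difference is cosmetic: where you invoke a generic ``dilation-adapted'' partition with bounded derivatives, the paper uses the standard conical partition $\{\Omega_j\}$ with $\Omega_j$ homogeneous of degree $0$ and supported in $\Gamma_j=\{2d\,y_j^2\ge|y|^2\}$, which makes the required boundedness of $x^\ell\partial_j^{(\ell)}\Omega_j$ immediate and removes the bookkeeping you flagged.
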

As will be clear in the proof below, the number $C_N$ depends on $d$ and $s$.
\begin{proof}
Following Chapter VIII, Section 2.3.1 of \cite{Stein}, we regard $\mathbb{R}^d$ as a union of cones.
For $1\leq j\leq d$, consider
\begin{equation}
\Gamma_j = \{y\in\R^d: 2 d y_j^2\ge \abs{y}^2\} \, .
\end{equation}
Clearly,  $\displaystyle \bigcup_{j=1}^d \Gamma_j = \R^d.$ Let  $\{\Omega_j\}_{j=1}^d$ be a collection of
functions,  homogeneous of degree 0 and smooth away from the origin, with
$\supp(\Omega_j) \subset \Gamma_j$ and satisfying
\begin{equation}
\sum_{j=1}^d \Omega_j(y) = 1, \quad \mbox{for all } y \ne 0.
\end{equation}
With respect to this partition of unity, we find that
\begin{equation}
\int_{\R^d} e^{i t Q(y)} e^{-\abs{y}^2} \eta(y) dy =
\sum_{j=1}^d \int_{\R^d} e^{i t Q(y)} e^{-\abs{y}^2} \eta_j(y) dy \, ,
\end{equation}
where we have written $\eta_j = \eta\Omega_j$.
We will estimate the integral corresponding to $j=1$. The others
are treated similarly.

For each $y \in {\rm supp}( \eta_1)$, we write $y =(x,u)$ with $x \in \mathbb{R}$ and $u \in \mathbb{R}^{d-1}$.
We will estimate the integral
\begin{equation}                                                                    \label{one_dim_int}
I_1(u) = \int_{\R} e^{ \pm i |t| x^2} e^{-x^2} \eta_1(x,u) dx \, ,
\end{equation}
uniformly with respect to $u \in \mathbb{R}^{d-1}$.

As in Lemma~\ref{l:outside_critical2}, the bound follows from integration by parts. 
With $u$ as above, it is clear that the function $\eta_1( \cdot, u)$ vanishes
near the origin. For any smooth function $f: \mathbb{R} \to \mathbb{C}$ supported
away from the origin, denote by $Df: \mathbb{R} \to \mathbb{C}$ the function given by
\begin{equation} \label{eq:defD}
(Df)(x) = \frac {d}{dx}\left(\frac {f(x)}{x}\right).
\end{equation}
Let $N \geq 1$ and integrate \eqref{one_dim_int} by parts $N$ times. The result is
\begin{align}                                                                                               \label{one_dim_int_byparts}
|I_1(u)| &= \frac {1}{(2 |t|)^N} \abs{\int_{\R} e^{\pm i |t| x^2}
D^N(e^{-x^2}\eta_1(x,u)) dx}.
\end{align}
By induction, one shows that
\begin{equation} \label{eq:DN}
D^N(e^{-x^2}\eta_1(x,u)) = e^{-x^2} \sum_{k=0}^N
\frac{\partial_1^{(N-k)}\eta_1(x,u)}{x^{N+k}} P_{k,N}(x) \, ,
\end{equation}
where $P_{k, N}$ is a polynomial of degree at most $2k$ with
coefficients that depend only on $k$ and $N$. 

Since $\sqrt{2d} \abs{x} \ge s$ on the support of $\eta_1$, we conclude two facts. First, for each $x \neq 0$ and $m \geq 1$,
\begin{eqnarray}
\abs{\partial_1^{(m)} \eta_1(x,u)} & \leq & \sum_{\ell =0}^m {m \choose \ell} \frac{\abs{ x^{\ell} \partial_1^{(\ell)} \Omega_1(x,u)}}{ |x|^{\ell}} \abs{ \partial_1^{(m- \ell)} \eta(x,u)} \nonumber \\
& \leq & C_m \| \eta \|_{1, m} \frac{(s + \sqrt{2d})^m}{s^m} \, .
\end{eqnarray}
Here we have used that $\Omega_1$ is smooth away from the origin, and homogeneous of degree zero, i.e. that $x^{\ell} \partial_1^{(\ell)} \Omega_1(x,u)$ is a bounded function.

Next, it is clear that
\begin{eqnarray}
\abs{I_1(u)} & \le & \frac{1}{(2|t|)^N} \sum_{k=0}^{N} \int_{\mathbb{R}} \frac{\abs{ \partial^{(N-k)} \eta_1(x,u)}}{|x|^{N+k}} e^{- x^2} \abs{P_{k,N}(x)} \, dx  \nonumber \\
& \leq & \frac{C_N}{|t|^N} \| \eta \|_{1,N} \, .
\end{eqnarray}
Since a similar estimate holds for each $1 \leq j \leq d$, we have
proven \eqref{eq:Qetaest}.
\end{proof}

If the gaussian in (\ref{eq:Qest}) is multiplied by a coordinate variable and a
smooth function with compact support containing the origin, then the decay
rate, in large $|t|$, can also be improved a bit.

\begin{lem} \label{lem:gen}
Let $\eta \in C_0^{\infty}( \mathbb{R}^d)$ with support containing a neighborhood of
zero. Take $Q$ as in (\ref{eq:quad}) and any $1 \le k \le d$. There exists a number $C$
such that 
\begin{equation} \label{eq:genlem}
\abs{\int_{\R^d} e^{i t Q(y)} e^{-\abs{y}^2} y_k \eta(y) dy} \le \frac{C}{|t|^{(d+1)/2}} \left( \| \eta \|_{\infty} + \sum_{j=1}^d \| \hat{\eta}_j \|_{j, d+2} \right)  
\end{equation}
holds for all $|t| \geq 1$. Here we have set $\hat{\eta}_j(y) = e^{- y_j^2} \eta(y)$.
\end{lem}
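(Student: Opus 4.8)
The plan is to exploit the one feature that distinguishes~\eqref{eq:genlem} from the generic quadratic stationary-phase rate $|t|^{-d/2}$: the amplitude $y_k\eta$ vanishes at the origin, which is the unique and non-degenerate critical point of $Q$. I would evaluate the integral as an iterated one-dimensional integral --- both the phase $Q(y)=\sum_j\epsilon_jy_j^2$ and the Gaussian $e^{-|y|^2}=\prod_je^{-y_j^2}$ factor over the coordinates --- extracting from each one-dimensional integral its leading asymptotic term while keeping the oscillation in the remaining variables alive. The basic one-dimensional fact is that, for a smooth $\Phi$,
\[
\int_{\R}e^{it\epsilon y^2-y^2}\Phi(y)\,dy=\sum_{m=0}^{M-1}\frac{\Phi^{(m)}(0)}{m!}\,\mu_m(t)+R_M(t),\qquad \mu_m(t)=\int_{\R}e^{(it\epsilon-1)y^2}y^m\,dy,
\]
where $\mu_m(t)=0$ for odd $m$ and $|\mu_m(t)|\le C_m|t|^{-(m+1)/2}$ for even $m$, while the remainder $R_M$ --- using that the Taylor remainder vanishes to order $M$ at $0$ and integrating by parts --- satisfies $|R_M(t)|\le C_M|t|^{-M/2}$ times finitely many derivatives of $\Phi$. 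The leading moment $\mu_0$ is the one-dimensional case of Lemma~\ref{lem:Qest}, the higher moments are elementary, and the part of $\Phi$ supported away from $y=0$ (if one prefers to split it off) is disposed of by Lemma~\ref{lem:Qetaest}.

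I would peel off first the $d-1$ coordinates $y_j$ with $j\ne k$, and only then $y_k$. At the step that integrates out $y_j$ ($j\ne k$), the amplitude in that variable is, up to factors that are constant in $y_j$ (namely $y_k$ and the Gaussians of the not-yet-integrated coordinates), a coordinate restriction of $\hat\eta_j=e^{-y_j^2}\eta$ with the already-integrated coordinates set to $0$; its leading term is $\mu_0(t)$ times that restriction evaluated at $y_j=0$, with $|\mu_0(t)|\le\sqrt\pi\,|t|^{-1/2}$, and the remainder is controlled by a few $y_j$-derivatives of $\hat\eta_j$. The leading term, regarded as a function of the remaining variables, still carries the full oscillatory factor in those variables, so the procedure iterates. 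After the $d-1$ peels one is left with a one-dimensional integral in $y_k$ whose amplitude is proportional to $y_k\,e^{-y_k^2}\,\eta(0,\dots,0,y_k,0,\dots,0)$ and hence vanishes at $y_k=0$: the $m=0$ moment is annihilated by the $y_k$ factor and the $m=1$ moment vanishes by oddness, so the first surviving contribution involves $\mu_2(t)$, of modulus at most $C|t|^{-3/2}$, with coefficient proportional to $\partial_k\eta(0)$ (equivalently a first $y_k$-derivative of $\hat\eta_k$). Multiplying the $d-1$ factors of size $|t|^{-1/2}$ by this factor of size $|t|^{-3/2}$ gives the dominant contribution, of size $|t|^{-(d+2)/2}|\partial_k\eta(0)|$; every remainder term decays at least as fast and, by construction, involves only single-direction derivatives of the functions $\hat\eta_j$. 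This is exactly why the right-hand side of~\eqref{eq:genlem} is $\|\eta\|_\infty+\sum_j\|\hat\eta_j\|_{j,d+2}$ and not a mixed-derivative norm, the undifferentiated $\|\eta\|_\infty$ entering only as the sup-norm of the amplitude in the lower-order terms. Since $(d+2)/2\ge(d+1)/2$, the asserted bound follows, indeed with slightly better decay.

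I expect two points to require care. The genuine obstacle is keeping the oscillation alive through the iterated peeling: a direct application of van der Corput in a single variable (as in Lemma~\ref{lem:1dest} with $k=2$, which would apply here since the phase $t\epsilon_jy_j^2$ has second derivative $2t\epsilon_j$ with $|2t\epsilon_j|\ge2|t|$) replaces the integral by a non-oscillatory majorant, after which one cannot continue in the remaining coordinates; one must instead work with the asymptotic expansion above and retain the leading term as a bona fide oscillatory integral in those coordinates. The second point is pure bookkeeping: one has to track how many derivatives each peel spends --- a Taylor step together with the integrations by parts that make the corresponding remainder decay, and the orders to which the Lemma~\ref{lem:Qetaest} tails must be carried so as to beat $|t|^{-(d+1)/2}$ uniformly for $|t|\ge1$ --- and verify that this is never more than $d+2$ derivatives of $\hat\eta_j$, with all constants independent of $t$.
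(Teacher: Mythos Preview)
Your strategy---iterate one-dimensional stationary phase over the coordinates, saving $y_k$ for last so that the vanishing amplitude kills the $\mu_0$ and $\mu_1$ moments there---is sound in principle and would yield the decay rate $|t|^{-(d+1)/2}$. It is, however, genuinely different from the paper's argument, which does \emph{not} iterate over coordinates at all. The paper instead uses a conic partition of unity $\{\Omega_j\}$ subordinate to the cones $\Gamma_j=\{2d\,y_j^2\ge|y|^2\}$, then splits each piece at scale $\epsilon=|t|^{-1/2}$ via a cutoff $\alpha_\epsilon$: the region $|y|\lesssim\epsilon$ is bounded crudely by $\epsilon^{d+1}\|\eta\|_\infty$, while on $|y|\gtrsim\epsilon$ one integrates by parts $N=d+2$ times \emph{only in the dominant direction $y_j$}, exploiting $|y_j|\gtrsim\epsilon$, and then integrates the resulting $|y|^{-(N+n)}$ singularities. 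The point of the cone decomposition is precisely that all derivatives land in a single coordinate direction, which is why the right-hand side of \eqref{eq:genlem} involves only the one-directional norms $\|\hat\eta_j\|_{j,d+2}$.

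The gap in your argument is your claim that the iterated approach ``by construction involves only single-direction derivatives of the functions $\hat\eta_j$.'' This fails for $d\ge 3$. After peeling off $y_1$ you obtain, beyond the leading term $\mu_0\,\eta(0,y')$, the intermediate moments $\tfrac{\mu_m}{m!}\partial_1^m\eta(0,y')$ for even $2\le m<d$, each carrying a prefactor of size only $|t|^{-(m+1)/2}$. A crude (non-oscillatory) bound on the remaining $(d-1)$-dimensional integral gives these terms total size $|t|^{-(m+1)/2}$, which is \emph{not} $\le|t|^{-(d+1)/2}$ when $m<d$. Hence they too must be fed forward into the oscillatory machinery in the remaining variables, and doing so requires $\partial_{j'}$-derivatives (for $j'\neq 1$) of $\partial_1^m\eta$, i.e.\ genuinely mixed partials. (Concretely: in $d=3$, the $m=2$ term $\mu_2\,\partial_1^2\eta(0,y_2,y_3)$ has prefactor $|t|^{-3/2}$; to reach $|t|^{-2}$ you must use oscillation in $(y_2,y_3)$, which costs $\partial_2$- or $\partial_3$-derivatives of $\partial_1^2\eta$.) Your method therefore proves a bound with $\|\eta\|_{C^{d+2}}$ on the right, not the stated single-direction norm $\sum_j\|\hat\eta_j\|_{j,d+2}$; these are not comparable, and the downstream application in the proof of Theorem~\ref{thm:xbd} makes explicit use of the single-direction structure.
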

\begin{proof}
Again, we follow Chapter VIII, Section 2.3.1 of \cite{Stein}. Introduce $\Gamma_j$ and $\{ \Omega_j \}_{j=1}^d$ as in Lemma~\ref{lem:Qetaest} and write
\begin{equation}
\int_{\R^d} e^{i t Q(y)} e^{-\abs{y}^2} y_k \eta(y) dy =
\sum_{j=1}^d \int_{\R^d} e^{i t Q(y)} e^{-\abs{y}^2} y_k
\eta_j(y) dy,
\end{equation}
where again we have set $\eta_j = \eta \Omega_j$.

We bound the integral corresponding to $j=1$; the others
are treated similarly. To start, we insert a cut-off function 
$\alpha \in C_0^{\infty}(\mathbb{R}^d)$ with $\| \alpha\|_{\infty} \leq 1$ and
\begin{equation}
\alpha(y) = \left\{\begin{array}{cc} 1 & \mbox{if } |y| \leq 1, \\
0 & \mbox{if } |y| \geq 2 \, . \end{array} \right.
\end{equation}
For any $0 < \epsilon \leq 1$, set $\alpha_{\epsilon}(y) = \alpha(y/ \epsilon)$ and note that
\begin{eqnarray} \label{eq:Om1est}
\int_{\R^d} e^{i t Q(y)} e^{-\abs{y}^2} y_k \eta_1(y) dy &= & \int_{\R^d} e^{i t Q(y)} e^{-\abs{y}^2} y_k \eta_1(y) \alpha_{\epsilon}(y) dy \nonumber \\
& \mbox{ } & \quad + \int_{\R^d} e^{i t Q(y)} e^{-\abs{y}^2}
y_k \eta_1(y) (1 - \alpha_{\epsilon}(y))dy \, .
\end{eqnarray}
We will choose a value of $\epsilon$ later.

The first integral on the right hand side of (\ref{eq:Om1est}) 
satisfies
\begin{eqnarray} \label{eq:firstint}
\abs{\int_{\R^d} e^{i t Q(y)} e^{-\abs{y}^2} y_k \eta_1(y)\alpha_{\epsilon}(y) dy} & \le & \int_{\abs{y}\le 2\epsilon} \abs{y} \abs{\eta_1(y)}  \, dy \nonumber \\
& \leq & C \|  \eta \|_{\infty}
\epsilon^{d+1} \, ,
\end{eqnarray}
where $C$ depends on $d$ and $\| \Omega_1 \|_{\infty}$.

For the second integral on the right hand side of
(\ref{eq:Om1est}), we consider two cases. First, suppose that
$k \neq 1$. In this case, for each $y \in {\rm supp}(\eta_1)$ we 
write $y =(x, u)$ with $x \in \mathbb{R}$ and $u \in \mathbb{R}^{d-1}$.
The relevant one-dimensional integral is 
\begin{equation} \label{eq:defj1}
I_1(u) = \int_{\R} e^{\pm i |t| x^2} e^{- x^2}  \eta_1(x,u) \left( 1- \alpha_{\epsilon}(x, u) \right) \, d x \, .
\end{equation}
Again, we will integrate by parts. Let $D$ be as defined in \eqref{eq:defD}. 
We need an analogue of (\ref{eq:DN}). Observe that if
$f,g: \mathbb{R} \to \mathbb{C}$ are smooth functions, whose
product vanishes in a neighborhood of the origin, then for every integer $N \geq 1$,
\begin{equation} \label{eq:DNfg}
[D^N(fg)](x) = \sum_{n=0}^N F_N(x;n) \frac{g_u^{(N-n)}(x)}{x^{N+n}}
\, ,
\end{equation}
where
\begin{equation} \label{eq:Fcoef}
F_N(x;n) = \sum_{j=0}^n F_{N,n}(j) x^j f^{(j)}(x) \, .
\end{equation}
Here $F_{N,n}(j)$ are numerical coefficients that are independent of the functions $f$ and $g$.
We apply this formula with $f_u(x) = \Omega_1(x,u)$ and $g_u(x) = e^{-x^2} \eta(x,u) (1- \alpha_{\epsilon}(x,u))$.
Note that the product of $f_u$ and $g_u$ is supported on those $x$ for which $2dx^2 \geq \epsilon^2$ uniformly
with respect to $u$. In what follows, we will suppress the dependence of $f_u$ and $g_u$ on $u$.

Integration by parts yields
\begin{eqnarray} \label{eq:j1ibp}
\abs{I_1(u)} & = & \frac{1}{(2|t|)^N} \abs{ \int_{\R} e^{\pm i|t|x^2} \left[ D^N \left(f g \right) \right](x) \, d x} \nonumber \\
& \leq & \frac{1}{(2|t|)^N} \sum_{n=0}^N \int_{\R} \abs{F_N(x;n)} \, \frac{ \abs{g^{(N-n)}(x)}}{|x|^{N+n}} \, d x \, ,
\end{eqnarray}
and therefore
\begin{equation*}
\abs{ \int_{\R^d} e^{i t Q(y)} e^{-\abs{y}^2} y_k \eta_1(y) (1 - \alpha_{\epsilon}(y))dy} \, \leq  \frac{1}{(2|t|)^N} \sum_{n=0}^N \int_{\R^{d-1}} |u_k|  \, \int_{\R} \abs{F_N(x;n)} \, \frac{ \abs{g^{(N-n)}(x)}}{|x|^{N+n}} \, d x \, d u \, , \end{equation*}
where we have set $u_k = y_k$. 

Now, since $\mbox{supp}( \Omega_1) \subset \Gamma_1$, it is clear that
\begin{equation}
|x|^2 \geq \frac{|y|^2}{2d} \quad \Rightarrow \quad \frac{1}{|x|^{N+n}} \leq \frac{(\sqrt{2d})^{N+n}}{|y|^{N+n}} \, .
\end{equation}
Moreover, as $\Omega_1$ is smooth (away from 0) and homogeneous of degree 0, the function $F_N$ is bounded.

{F}rom these observations, we see that for any $0 \leq n \leq N$,
\begin{equation*}
\int_{\R^{d-1}} |u_k|  \, \int_{\R} \abs{F_N(x;n)} \, \frac{ \abs{g^{(N-n)}(x)}}{|x|^{N+n}} \, d x \, d u \leq 
C_{N} \int_{\R^{d-1}} |u_k|  \, \int_{\R} \frac{ \abs{g^{(N-n)}(x)}}{|(x,u)|^{N+n}} \, d x \, d u \, ,
\end{equation*}
where the final integral is well defined since $g$ is compactly supported in $B_{\epsilon}(0)^c$.

We bound the derivatives of $g$ using the product rule. Clearly, for any
integer $k \geq 1$,
\begin{eqnarray}
\abs{g^{(k)}(x) } & \leq & \sum_{j=0}^k  {k \choose j} \abs{  \partial_1^{(j)}\left(1- \alpha_{\epsilon}(x,u) \right)}  \abs{ \partial_1^{(k-j)} \hat{\eta}(x,u)} \nonumber \\
& \leq & 2 \| \partial_1^{(k)} \hat{\eta} \|_{\infty} + \sum_{j=1}^k {k \choose j} \epsilon^{-j} \| \partial_1^{(j)} \alpha \|_{\infty} \| \partial_1^{(k-j)} \hat{\eta} \|_{\infty} \, ,
\end{eqnarray} 
where we have set $\hat{\eta}(y) = e^{-y_1^2} \eta(y)$. In this case, with $0 \leq k \leq N$ the estimate
\begin{equation}
\| g^{(k)} \|_{\infty} \leq 2^{k+1} \| \alpha \|_{1,N} \| \hat{\eta} \|_{1,N} \epsilon^{-k} \, ,
\end{equation}
is valid for any $\epsilon \leq 1$.
As a result, we have that
\begin{eqnarray} \label{eq:secint}
\abs{ \int_{\R^d} e^{i t Q(y)} e^{-\abs{y}^2} y_k \eta_1(y) (1 - \alpha_{\epsilon}(y))dy} & \leq  &
\frac{1}{(2|t|)^N} \sum_{n=0}^N C_{N, n} \| g^{(N-n)} \|_{\infty} \int_{|y| \geq \epsilon} |y|^{1-N-n} \, d y \nonumber \\
& \leq & \frac{C_N}{|t|^N} \| \hat{\eta} \|_{1,N}  \epsilon^{d+1-2N}  \, ,
\end{eqnarray}
for any $N > d+1$ and $\epsilon \leq 1$.

Combining (\ref{eq:firstint}) and (\ref{eq:secint}), we have shown that
\begin{equation} \label{eq:K1K2}
\abs{\int_{\R^d} e^{i t Q(y)} e^{-\abs{y}^2} y_k \eta_1(y) dy} \leq \epsilon^{d+1} \left( C \| \eta \|_{\infty} +  \frac{C_N}{( \epsilon^2 |t| )^N} \| \hat{\eta} \|_{1,N} \right) \, , 
\end{equation}
for any $N > d+1$ and $\epsilon \leq 1$. Taking $|t| \geq 1$, $\epsilon = |t|^{-1/2}$, and $N = d+2$, we find an estimate of the form (\ref{eq:genlem}).
This completes the first case.

If $k =1$, we proceed as above using that
\begin{equation} \label{eq:DNfxg}
[D^N(f(x) x g(x))](x) = \sum_{k=0}^N F_N(x;k)
\frac{g^{(N-k)}(x)}{x^{N-1+k}} \, ,
\end{equation}
where
\begin{equation} \label{eq:Fcoef2}
F_N(x;k) = \sum_{j=0}^k F_{N,k}(j) x^j f^{(j)}(x) \, .
\end{equation}
(\ref{eq:genlem}) follows using the methods above.
\end{proof}

\baselineskip=12pt

\end{document}